\newtheorem{thm}{Theorem}[section]
\newtheorem{lem}[thm]{Lemma}
\newtheorem{defn}[thm]{Definition}
\theoremstyle{definition}
\NewDocumentCommand{\binomial}{omm}
{%
	\genfrac(){0pt}{}{#2}{#3}%
	\IfValueT{#1}{_{\!#1}}%
}
\NewDocumentCommand{\eulerian}{omm}
{%
	\genfrac<>{0pt}{}{#2}{#3}%
	\IfValueT{#1}{_{\!#1}}%
}
\def \s {\sigma}
\title{Smoothly Splitting Amplitudes and Semi-Locality}
\author[a]{Freddy Cachazo,}
\author[b]{Nick Early}
\author[a,c]{and Bruno Gim\'enez Umbert}
\affiliation[a]{Perimeter Institute for Theoretical Physics, Waterloo, ON N2L 2Y5, Canada.}
\affiliation[b]{Max Planck Institute for Theoretical Physics, F\"{o}hringer Ring 6, Munich, Germany.}
\affiliation[c]{Department of Physics and Astronomy, Western University, London, ON N6A 5B7, Canada.}
\emailAdd{fcachazo@pitp.ca}
\emailAdd{earlnick@mpp.mpg.de}
\emailAdd{bgimenez@uwo.ca}
\abstract{ In this paper, we study a novel behavior developed by certain tree-level scalar scattering amplitudes, including the biadjoint, NLSM, and special Galileon, when a subset of kinematic invariants vanishes without producing a singularity. This behavior exhibits properties which we call \textit{smooth splitting} and \textit{semi-locality}. The former means that an amplitude becomes the product of exactly three amputated Berends-Giele currents, while the latter means that any two currents share one external particle.  
	We call these smooth splittings 3-splits.  In fact, there are exactly $\binom{n}{3}-n$ such 3-splits of an $n$-particle amplitude, one for each tripod in a polygon; as they cannot be obtained from standard factorization, they are a new phenomenon in Quantum Field Theory. In fact, the resulting splitting is analogous to the one first seen in Cachazo-Early-Guevara-Mizera (CEGM) amplitudes which generalize standard cubic scalar amplitudes from their ${\rm Tr}\, G(2,n)$ formulation to ${\rm Tr}\, G(k,n)$, where ${\rm Tr}\, G(k,n)$ is the tropical Grassmannian. Along the way, we show how smooth splittings naturally lead to the discovery of mixed amplitudes in the NLSM and special Galileon theories and to novel BCFW-like recursion relations for NLSM amplitudes.
}
\begin{document}
	\maketitle
	\addtocontents{toc}{\protect\setcounter{tocdepth}{1}}
	\def \tr {\nonumber\\}
	\def \nn {\nonumber}
	\def \la {|}
	\def \ra {|}
	\def \dd {\Theta}
	\def\hset{\texttt{h}}
	\def\gset{\texttt{g}}
	\def\sset{\texttt{s}}
	\def \be {\begin{equation}}
		\def \ee {\end{equation}}
	\def \ba {\begin{eqnarray}}
		\def \ea {\end{eqnarray}}
	\def \k {\kappa}
	\def \h {\hbar}
	\def \r {\rho}
	\def \l {\lambda}
	\def \be {\begin{equation}}
		\def \en {\end{equation}}
	\def \bes {\begin{eqnarray}}
		\def \ens {\end{eqnarray}}
	\def \red {\color{Maroon}}
	\def \pt {{\rm PT}}
	\def \s {\textsf{s}}
	\def \t {\textsf{t}}
	\def \C {\textsf{C}}
	\def \tp {||}
	\def \p {x}
	\def \x {z}
	\def \V {\textsf{V}}
	\def \ls {{\rm LS}}
	\def \ma {\Upsilon}
	\def \SL {{\rm SL}}
	\def \GL {{\rm GL}}
	\def \w {\omega}
	\def \e {\epsilon}
	
	\numberwithin{equation}{section}
	
	%%%%%%%%%%%%%%%%%%%%%%%%%%%%%%%%%%%%%

	% \section*{By Nick: Wish list}
	
	% \begin{enumerate}
		%     \item Soft limits and cyclic ``(2,4) chains.'' (Freddy)
		%     \item Discuss "staircase cyclic orders $\alpha$" and (non-residual) factorizations of the biadjoint scalar $m(\mathbb{I},\alpha)$?  The point is that this is obviously (to us at least) \textit{not} what we're doing, but there's a rough analogy that might be tempting, in any case...
		%     \item Discussion about smooth splittings for loop-order amplitudes?
		%     \item Add a math introduction (in progress!).
		%     \item Discussions: connection to interpretation using degenerate associahedra and canonical forms: it would certainly be possible to try to reformulate our work in terms of degenerate associahedra in order to prove the splitting property; however this approach is seemingly blind to semi-locality; therefore the CHY formula appears to be optimal in this regard.
		%     \item Add citation to 2008.07271.
		% \end{enumerate}
	
	\section{Introduction}\label{sec:introduction}
	
	%%%%%%%%%%%%%%%%%%%%%%%%%%%%%%%%%%%%%%
	
	Unitarity and locality are the basic pillars of quantum field theory. Using them as constraints on the S-matrix allows for the construction of scattering amplitudes using recursion relations such as the Berends-Giele \cite[p.23]{Dixon:1996wi} or BCFW techniques \cite{Britto:2005fq,Britto:2004ap,Britto:2004nc}. At tree-level\footnote{Instead of restricting to tree-level, the correct way to describe this is by saying that one-particle states in the completeness relation imply the presence of poles. In this paper we only work at tree-level so the restriction is enough.}, unitarity implies that scattering amplitudes have simple poles of the form $1/(P^2-m^2+i\epsilon)$, with $P$ the sum of momenta of a subset of particles participating in the process and $m$ the mass of a particle in the spectrum of the theory. Moreover, the residues are also completely determined to be the product of two smaller scattering amplitudes; this property is called {\it factorization}. The original set of particles is partitioned into two sets, often called ``left" and ``right". The two amplitudes in the residue only share an ``internal" particle, with momentum $P$, which is taken to be on-shell, i.e. $P^2=m^2$. Now, locality is the statement that tree-level amplitudes do not have any other kind of singularities which makes clear the power of the constraints in their computation. There is an important caveat; unitary and locality constrain singularities at finite momenta and unless emergent symmetries at large momenta are present \cite{Arkani-Hamed:2008bsc}, there could be singularities at infinite momenta which prevent the complete reconstruction of the amplitude \cite{Benincasa:2007xk, elvanghuang2015}.
	
	Scattering amplitudes in theories with color/flavour are dramatically simplified by the color decomposition into partial amplitudes \cite[p.4]{Dixon:1996wi}. The main reason for the simplification is that each partial amplitude can only contain a certain subset of all possible poles the full amplitude can have. These are called {\it planar poles}. For the canonical ordering $\mathbb{I} = (1,2,\ldots ,n)$, the only possible poles are of the form $1/(p_i+p_{i+1}+\ldots +p_{i+m})^2$. From now on we restrict our attention to massless theories and only to the canonical ordering. Therefore we will simply refer to it as the planar ordering. 
	
	Conventional wisdom would say that in regions where kinematic invariants of the form $(p_i+p_j)^2$ vanish and are not planar then a partial amplitude would not have any interesting behavior.   
	
	In this work we find that in fact there are subspaces in the space of kinematic invariants where some non-planar kinematic invariants vanish and the partial amplitude becomes the product of lower point objects without becoming singular. We call the resulting behavior of amplitudes a {\it smooth splitting} and the corresponding subspace of kinematics invariants {\it split kinematics}.
	
	Unlike standard factorizations, smooth splittings are semi-local, i.e., a particle can participate in two of the factors. Each factor is not an amplitude but an amputated Berends-Giele current (see e.g. \cite{Mafra:2016ltu}) as they possess one emerging leg which is off-shell. The current is said to be amputated because the propagator corresponding to the off-shell leg is not present.
	
	We find that amplitudes factor in exactly three pieces and we call the corresponding behaviour a 3-splitting. When one of the amputated currents only has two on-shell external legs, it becomes trivial, i.e. it is a constant. In such degenerate 3-splits, all original particles but one are either in the ``left" or ``right" currents while exactly one on-shell external particle is in both. In general 3-splits, each pair of currents shares an external on-shell particle. We call this phenomenon {\it semi-locality}. 
	
	A very important property of 3-splits is that they cannot be obtained from standard unitarity or factorization arguments and thus they do not have any close analog within the standard QFT literature. However, in the recent generalization of QFT amplitudes known as CEGM amplitudes, analogous 3-split behavior is common but it appears as the residue of a pole.  
	
	A simple way to define split kinematics is by using the structure of the matrix of Mandelstam invariants with entries $s_{a,b}$. Start by introducing three rows and columns labeled by $(i,j,k)$ with $i<j<k$ with non-zero entries and not all three labels cyclically adjacent. Without loss of generality we often set $i=1$. This gives the matrix a ``tic-tac-toe" structure. In other words, the matrix now has nine chambers. Split kinematics simply sets to zero the elements $s_{a,b}$ in the six non-diagonal chambers. A schematic representation is given in Figure \ref{fig:intromatdensityplot}. 
	\begin{figure}[h!]
		\centering
		\includegraphics[width=0.45\linewidth]{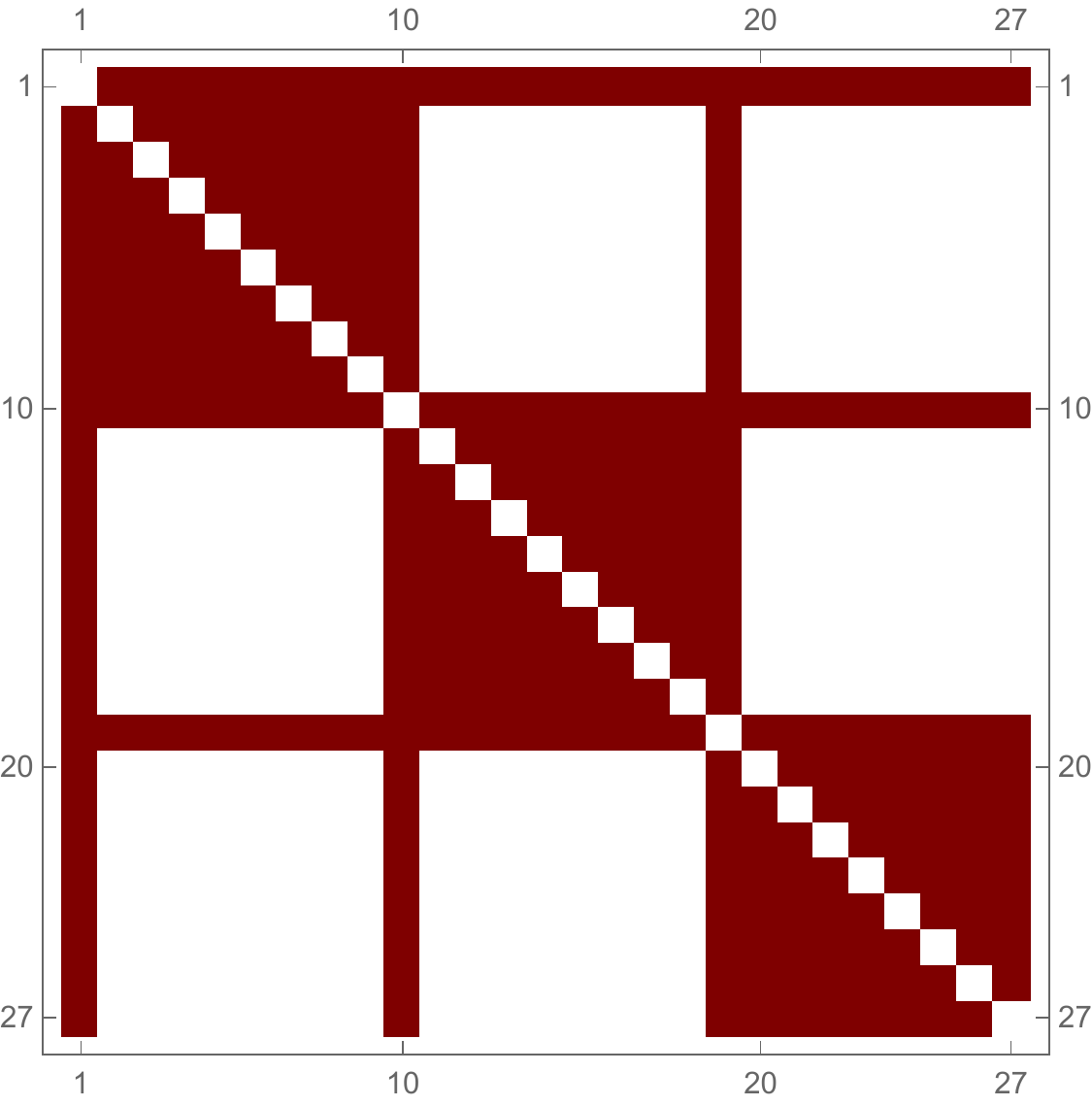}
		\caption{Matrix of Mandelstam invariants $s_{ab}$ for the $n=27$ split kinematics $(1,10,19)$, having set to zero all $s_{ab}$'s in the unshaded regions.}
		\label{fig:intromatdensityplot}
	\end{figure}
	Here the dark entries are non-zero while white entries are zero. Of course, since we are dealing with massless particles any invariant of the form $s_{a,a}$ is zero and hence the white diagonal.  
	
	There are exactly $\binom{n}{3}-n$ 3-splits. Incidentally, this is also the dimension of the space of kinematic invariants of generalized $k=3$ CEGM amplitudes and in particular it is the number of planar basis elements, each of which characterizes a pole \cite{Early:2019eun}. Degenerate 3-splits are achieved when two of the labels $(i,j,k)$ are consecutive in the canonical ordering.
	
	The simplest non-trivial 3-split is obtained from the $n=6$ biadjoint partial amplitude $m_6(\mathbb{I} ,\mathbb{I})$ under the $(1,3,5)$ split kinematics, i.e, by setting $s_{24}=s_{46}=s_{62}=0$ and the result is given by
	\be\label{example}
	\left. m_6(\mathbb{I} ,\mathbb{I}) \right|_{\rm split\, kin.} = \left(\frac{1}{s_{12}}+\frac{1}{s_{23}}\right)\left(\frac{1}{s_{34}}+\frac{1}{s_{45}}\right)\left(\frac{1}{s_{56}}+\frac{1}{s_{61}}\right).
	\ee
	This example is discussed in detail in Section \ref{sec: smooth splits intro}. Here we only point out the semi-local character of the expression since particle $3$ participates in the first and second factor, particle $5$ in the second and third, and particle $1$ in the third and first. It is also clear that any of the three factors are obtained from a four-point (three on-shell) Berends-Giele current by amputating the propagator of the off-shell leg. For example, in the first factor the off-shell leg has momentum $P_I = -p_1-p_2-p_3$ and $P_I^2 = s_{123}\neq 0$.

	\subsection{Main Results}
	
	Now we present the main results of this work. We show that on the $(i,j,k)$ split kinematics subspace, the biadjoint partial amplitude becomes a product
	\be\label{eq: 3split biadjoint }
	\left. m_n(\mathbb{I} ,\mathbb{I}) \right|_{\rm split\, kin.} = \mathcal{J}(i,i+1,\ldots,j-1,j){\cal J}(j,j+1,\ldots,k-1,k)\mathcal{J}(k,k+1,\ldots,i-1,i).
	\ee
	We further show that $\mathcal{J}(1,2,\ldots, m)$ denotes an amputated current with the off-shell leg carrying momentum $-(p_1+p_2+\cdots +p_m)$. 
	
	Surprisingly, we find that not only biadjoint amplitudes exhibit smooth splitting but so do non-linear sigma model (NLSM) \cite{Kampf:2013vha} and special Galileon amplitudes \cite{Cachazo:2014xea}. The special Galileon theory does not have color/flavor ordering and thus it seems to be outside the scope of the construction. However, the derivative interactions manage to keep the amplitude finite in the limit as split kinematics is approached. In both, NLSM and special Galileon amplitudes, smooth splitting produces currents in their corresponding {\it extended theories}, as defined by Cachazo, Cha, and Mizera (CCM) \cite{Cachazo:2016njl}. In fact, the smooth splitting behavior provides a new approach to discover the extended theories without resorting to soft limits.  
	
	In very schematic form, NLSM and special Galileon amplitudes split as follows,
	\be\label{main2}
	\left. A_n^{\rm NLSM}(\mathbb{I})\right|_{\rm split\, kin.} = \mathcal{J}^{\rm NLSM}(\mathbb{I})\; \mathcal{J}^{{\rm NLSM}\oplus \phi^3}(\mathbb{I}_1|\beta_1)\;\mathcal{J}^{{\rm NLSM}\oplus \phi^3}(\mathbb{I}_2|\beta_2)
	\ee
	and
	\be\label{main3}
	\left. A_n^{\rm sGal}\right|_{\rm split\, kin.} = \mathcal{J}^{\rm sGal}\; \mathcal{J}^{{\rm sGal}\oplus \phi^3}(\beta_1)\; \mathcal{J}^{{\rm sGal}\oplus \phi^3}(\beta_2)
	\ee
	where $\mathbb{I}_i$ and $\beta_i$ are planar orderings of certain subsets of particle labels. These formulas are derived in Section \ref{sec: smooth splits others}.
	
	As explained above, having knowledge of the behavior of amplitudes in regions of the space of kinematic invariants can be used to partially or totally reconstruct them. Split kinematics provides novel regions that can be used in addition to unitarity to constrain amplitudes. In fact, NLSM amplitudes are examples where standard recursive techniques do not work \cite{Kampf:2013vha}. This motivated the use of soft-limits in their recursive construction \cite{Cheung:2015ota}. Here we show how smooth splitting leads to novel BCFW relations for NLSM amplitudes that do not require knowledge of soft limits.   
	
	More precisely, the new BCFW construction induces split kinematics at two points on the one-dimensional deformation space, chosen to be $z=1$ and $z=-1$. We prove that the following formula,
	$$ A^{\rm NLSM}_n(\mathbb{I}) = \frac{1}{2\pi i}\oint_{|z|=\epsilon} dz \frac{A^{\rm NLSM}(z)}{z(1-z^2)},$$
	provides a recursion relation without contributions at infinity. As an example we obtain a new formula for the six-point amplitude
	\be
	\begin{split}
		A^{\rm NLSM}_6(\mathbb{I}) = & \frac{(s_{12}+s_{23}-s_{123})(s_{45}+s_{56})}{s_{123}}+  \frac{(s_{23}+s_{34})(s_{56}+s_{61}-s_{234})}{s_{234}}+\\ & \frac{(s_{34}+s_{45}-s_{345})(s_{61}+s_{12}-s_{345})}{s_{345}}+s_{34}+s_{45}-s_{345}.
	\end{split}
	\ee

	This work is organized as follows.  We start in Section \ref{sec: smooth splits intro} with examples that motivate and illustrate smooth splitting and split kinematics. 
	This kinematics generically leads to smooth 3-splits but we also point out the border cases in which it produces smooth 3-splits with only two nontrivial factors. In Section \ref{sec: smooth splits biadjoint scalar}, we study 3-splits in the biadjoint theory. We prove the general formula in terms of three amputated currents using the CHY formalism. In Section \ref{sec: smooth splits others}, we consider NLSM and special Galileon amplitudes. In Section \ref{sec: recursions}, we use smooth splittings to derive novel BCFW-like recursion relations for NLSM amplitudes in which soft limits are not required. In Section \ref{sec:discussion} we discuss relations to soft limits, soft triangulations, CEGM amplitudes, how to smoothly split currents, and generalizations to other theories.

	\section{Split Kinematics}\label{sec: smooth splits intro}
	
	The purpose of this section is to give a presentation of our main results with the biadjoint theory as example; further discussion and proofs are given in subsequent sections, including extensions to other theories.
	
	We first introduce split kinematics, in Definition \ref{defn: split kinematics} for Mandelstam invariants $s_{i,j}$ and for planar basis invariants $s_{i,i+1,\ldots, j}$, using a homotopy rule on directed arcs in a disk with (counterclockwise)-oriented boundary, in Figure \ref{fig:equivalencerelationarcs} below.
	We shall always assume that a triple $(i,j,k)$ has the cyclic order $i<j<k$.
	\begin{defn}\label{defn: split kinematics}
		Given any triple of distinct indices $i,j,k$ that is not a (cyclic) interval in $\{1,\ldots,n\}$ of the form $(a,a+1,a+2)$, say with with $1\le i<j<k \le n$, the \textit{split\footnote{The term \textit{split} kinematics is reminiscent of a particular kind of matroid subdivision, the d-split, which decomposes a polytope into $d$ maximal cells with a condition on the way their internal faces intersect.  In this paper we show that on split kinematics, certain amplitudes decompose as two-fold or three-fold products.  The analogy is motivated in part by  CEGM amplitudes in light of Equation \eqref{eq:3n3split}, where the planar kinematic invariant $\eta_{246}$ is literally dual to a 3-split matroid subdivision with three maximal cells.} kinematics} subspace is characterized by the following condition: $s_{a,b} = 0$ whenever the pair $(a,b)$ interlaces the triple $(i,j,k)$, having modulo cyclic rotation
		\begin{eqnarray}
			& & a<i<b<j<k,\ \text{or}\nonumber\\
			& & a<j<b<k<i,\ \text{or}\\
			& & a<k<b<i<j.\nonumber
		\end{eqnarray}
	\end{defn}
	For example, for $n=7$ particles and the triple (1,3,6), the split kinematics subspace is cut out by the equations
	$$s_{24} = 0,\ s_{25} = 0,\ s_{27} = 0,\ s_{47} = 0,\ s_{57}=0.$$  
	
	% For example, for $n=6$ particles, Equation \eqref{example} corresponds to the triple (1,3,5), in which case the split kinematics subspace is characterized by $s_{2,4} = s_{4,6} = s_{2,6}=0$.  
	
	We first formulate the notion of a smooth split in full generality, in the prototypical case, the biadjoint scalar partial amplitude $m_n(\mathbb{I},\mathbb{I})$, and then show that only the case of a smooth 3-split can be achieved by a suitable restriction of the amplitude to a subspace of the kinematic space.
	
	The definition of the biadjoint theory can be found e.g. in \cite{Cachazo:2014xea}. Here we only provide the definition of $m_{n}(\mathbb{I},\mathbb{I})$ as it is our main object of study.
	
	\begin{defn}\label{defn: biad}
		Let ${\cal T}$ be the set of all planar unrooted binary trees with $n$ leaves. A momentum vector, $p\in \mathbb{R}^{1,D-1}$ with $p^2 := p\cdot p=0$, is assigned to each leaf such that $s_{ab}=(p_a+p_b)^2$ and $p_1+p_2+\ldots +p_n=0$. Given a tree $T\in {\cal T}$, each edge $e$ of $T$ partitions the leaves into two sets $L_e\cup R_e = [n]$. Clearly, 
		$$\left( \sum_{a\in L_e} p_a \right)^2 = \left( \sum_{a\in R_e} p_a \right)^2 := P_e^2. $$
		The partial biadjoint amplitude then given by  
		\be
		m_{n}(\mathbb{I},\mathbb{I}) := \sum_{T\in {\cal T}}\frac{1}{\prod_{e\in E(T)}P_e^2},
		\ee
		where $E(T)$ is the edge set of $T$. 
	\end{defn}

	\begin{defn}\label{defn: d-split}
		For any $d\ge 2$, a smooth d-split is a decomposition 
		\begin{eqnarray}\label{eq:d-split}
			m_{n}(\mathbb{I},\mathbb{I})\vert_{\rm split\, kin.} = \mathcal{J}(j_1,\ldots, j_2) \mathcal{J}(j_2,\ldots, j_3)\cdots \mathcal{J}(j_{d},\ldots, j_1),
		\end{eqnarray}
		where each $\mathcal{J}(a,\ldots, b)$ is an amputated current, with exactly one off-shell leg.
	\end{defn}
	In fact, we claim that in Definition \ref{defn: d-split}, only the case $d=3$ is possible.  Assuming this for now, then we have the following cases.
	\begin{enumerate}
		\item No pair of labels is cyclically consecutive, that is
		$$\{(j_1,j_2),(j_2,j_3),(j_3,j_1)\} \cap \{(1,2),(2,3),\ldots, (n,1)\} = \emptyset.$$
		In this case, all three factors are nontrivial amputated currents. 
		\item If exactly one pair is cyclically consecutive, say $(j_3,j_1)=(j_3,j_3+1)$ (modulo $n$), then $\mathcal{J}(j_3,\ldots, j_1) = 1$ and Equation \eqref{eq:d-split} reduces to 
		\begin{eqnarray}\label{eq:d-split2}
			m_{n}(\mathbb{I},\mathbb{I})\vert_{\rm split\, kin.} = \mathcal{J}(j_1,\ldots, j_2) \mathcal{J}(j_2,\ldots, j_1-1).
		\end{eqnarray}
		\item If two pairs are cyclically consecutive, so the triple is a single cyclic interval $(j,j+1,j+2)$, then no condition is imposed on the kinematics.
	\end{enumerate}
	Evidently, the $\binomial{n}{3}-n$ nontrivial smooth 3-splits are in in bijection with the interior tripods in a polygon with cyclic vertices, or equivalently, collections of triples $(i,j,k)$ such that no two pairs of indices are cyclically adjacent.
	
	% bijection with generic interior triangles in a convex polygon with cyclically ordered vertices, that is, triangles having at most one edge in common with the polygon, of the form $(i,i+1)$.
	
	Now we show that only $d=3$ can be achieved by restricting $m_n(\mathbb{I},\mathbb{I})$ to some subspace of the kinematic space\footnote{Here we are assuming that the $s_{ab}$ are formal variables; later we shall specialize to the case when they are inner products of momentum vectors.}.

	The argument is very simple. First note that by Equation \eqref{defn: biad}, a biadjoint amplitude $m_{n}(\mathbb{I},\mathbb{I})$ must have degree $-(n-3)$ in the Mandelstam invariants (or in physics terminology, mass dimension $-2(n-3)$). Now, an amputated current is nothing but an amplitude with one leg off-shell, i.e. such that one of the corresponding momentum vectors does not have zero Minkowski norm. Therefore the degree of a current agrees with that of an amplitude with the same number of legs.   
	
	Supposing that we had a smooth d-split as in Equation \eqref{eq:d-split}, then the degree of the product would be
	\be\label{degreeC}
	-\sum_{t=1}^d\left(j_{t+1} - j_t - 1\right) = -(n - d),
	\ee
	where the indices are cyclic modulo $n$, which matches the degree of $m_n(\mathbb{I},\mathbb{I})$ only when $d=3$.  Thus, smooth d-splits cannot occur for $d\not= 3$ as such.  Of course, one can imagine possible generalizations of smooth 3-splits in this direction, but such questions are beyond the scope of this paper.  See Section \ref{sec: smoothly splitting currents} for some additional remarks.
	
	% First notice that if $d=3$ then the degree of the righthand side of Equation \eqref{eq:d-split} is
	% $$-((j_2-j_1-1) + (j_3-j_2-1) + (j_1-j_3-1+n)) = -(n-3).$$
	% which is consistent, in that it equals the degree of $m_n(\mathbb{I},\mathbb{I})$.
	
	% Now supposing that we had a smooth d-split as in Equation \eqref{eq:d-split} with $d\ge 4$, then the degree of the product would be
	% $$-\sum_{t=1}^d\left(j_{t+1} - j_t - 1\right) = -(n - d),$$
	% a contradiction.  Thus, smooth d-split cannot occur for $d\ge 4$ as such.  Of course, if extra factors are allowed then things start to get interesting!  But this is beyond the scope of this paper.

	\subsection{Split Kinematics: Planar Poles }
	
	The rule for how planar poles $s_{i, i+1,\cdots j}$ decompose when restricted to split kinematics can be compactly formulated in a picture, as in Figures  \ref{fig:equivalencerelationarcs} and \ref{fig:part6135Intro}.
	
	All relations between planar basis elements that occur when imposing split kinematics have the form
	$$s_{a,a+1,\ldots, c} = s_{a,a+1,\ldots, b} + s_{b,b+1,\ldots, c},$$
	for $a<b<c$ modulo $n$.  To state the criterion determining which planar poles decompose in this way it is convenient to draw the $n$ indices $1,\ldots, n$ on the boundary of a disk, say counterclockwise.  Given a split $(i,j,k)$, then we connect the three legs $i,j,k$ in the center of the disk to form a \textit{tripod}, as in Figure \ref{fig:tripodexample}, which partitions the disk into three connected components which we identify with the amputated currents themselves.  
	\begin{figure}[h!]
		\centering
		\includegraphics[width=0.3\linewidth]{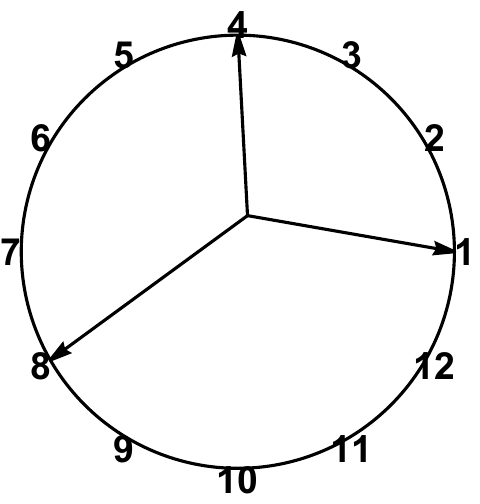}
		\caption{The tripod (1,4,8) for $n=12$ particles.}
		\label{fig:tripodexample}
	\end{figure}
	
	Thus, Figure \ref{fig:equivalencerelationarcs} can be straightforwardly generalized to determine the conditions under which such a decomposition takes place; see the Caption of Figure \ref{fig:equivalencerelationarcs} for details.
	\begin{figure}[h!]
		\centering
		\includegraphics[width=1\linewidth]{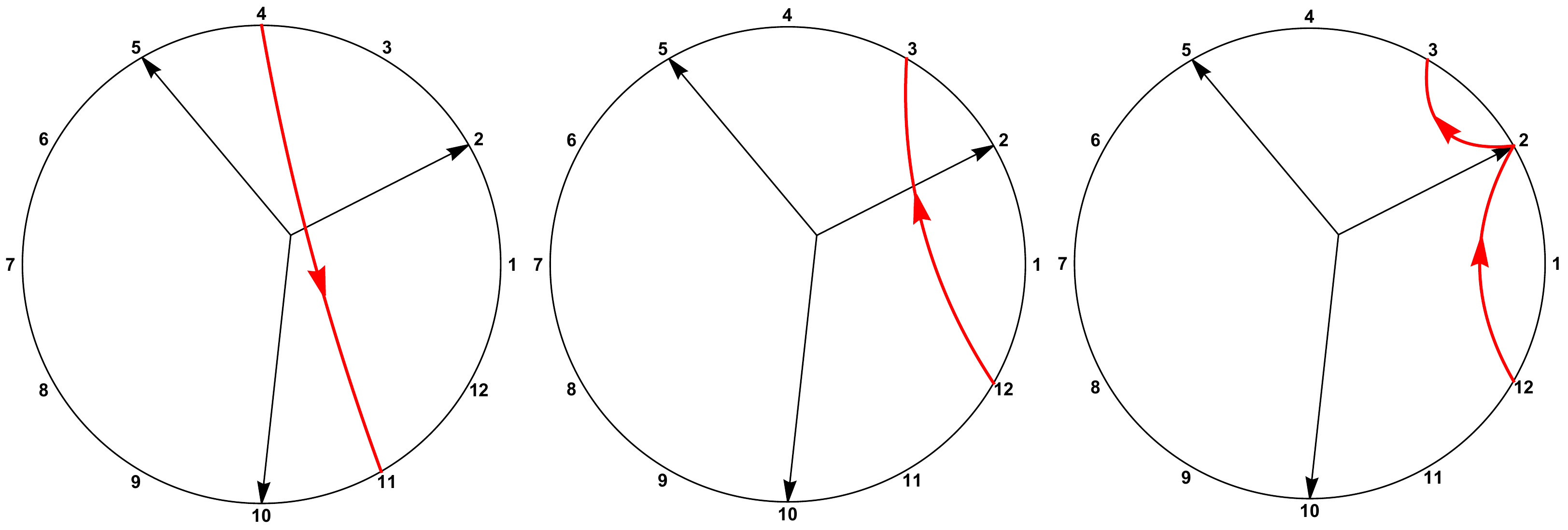}
		\caption{The boundary of the disk is always oriented counter-clockwise.  To determine the relations among planar poles imposed by split kinematics, draw any directed arc $(a \rightarrow c)$ and homotope the arc until it is near the parallel counter-clockwise oriented boundary of the disk.  If the arc has to cross exactly one leg of the tripod, say $b$ with $a<b<c$ cyclically, then the planar pole decomposes as $s_{a,a+1,\ldots,  c} = s_{a,a+1,\ldots, b} + s_{ b,b+1,\ldots,  c}$.  Above, due to momentum conservation, we have the relation $s_{a,a+1,\ldots, b} = s_{b+1,b+2,\ldots, a-1}$, noting that the directed arc correspondingly switches orientation in going from the left to the center diagrams.  Applying the homotopy to the parallel boundary of the disk gives that $s_{4,5,\ldots,11} = s_{12,1,2,3} = s_{12,1,2} + s_{2,3}$.}
		\label{fig:equivalencerelationarcs}
	\end{figure}
	In Figure \ref{fig:part6135Intro} we calculate the relations which define the split kinematics subspace $(1,3,5)$ in the planar basis.
	
	\begin{figure}[h!]
		\includegraphics[width=1\linewidth]{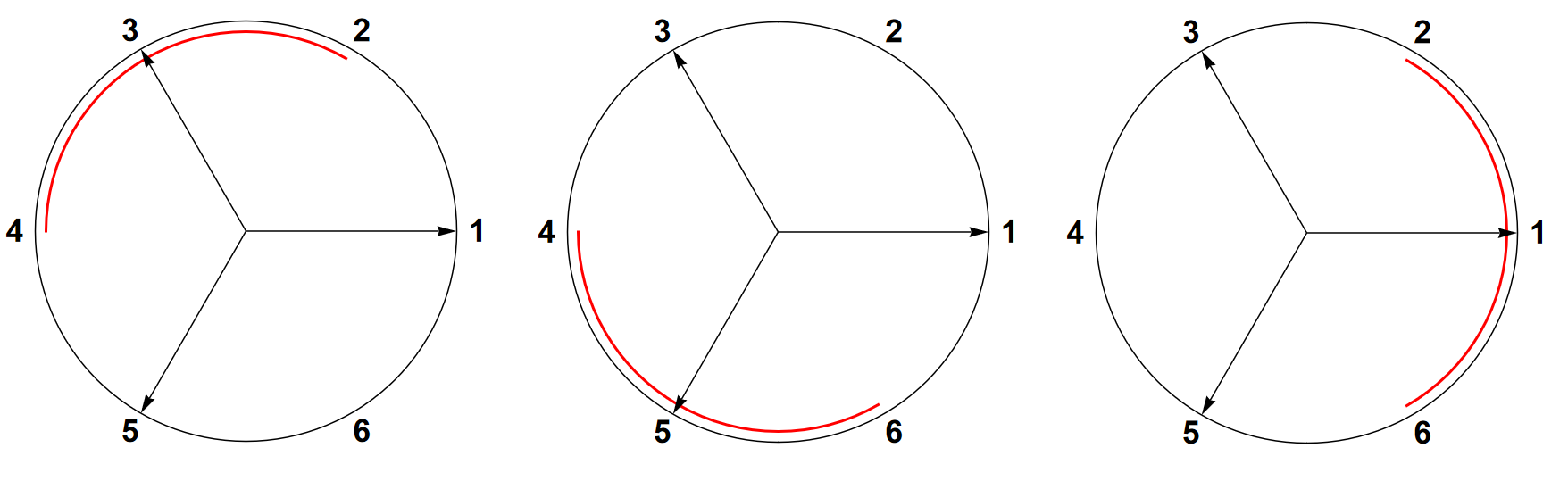}
		\centering
		\caption{Planar basis relations for the $n=6$ split kinematics given by the partition $(1,3,5)$.  Here $s_{234} = s_{23} + s_{34}$, $s_{456} = s_{45} + s_{56}$, and $s_{612} = s_{61} + s_{12}$.}
		\label{fig:part6135Intro}
	\end{figure}

	\begin{figure}[h!]
		\centering
		\includegraphics[width=0.5\linewidth]{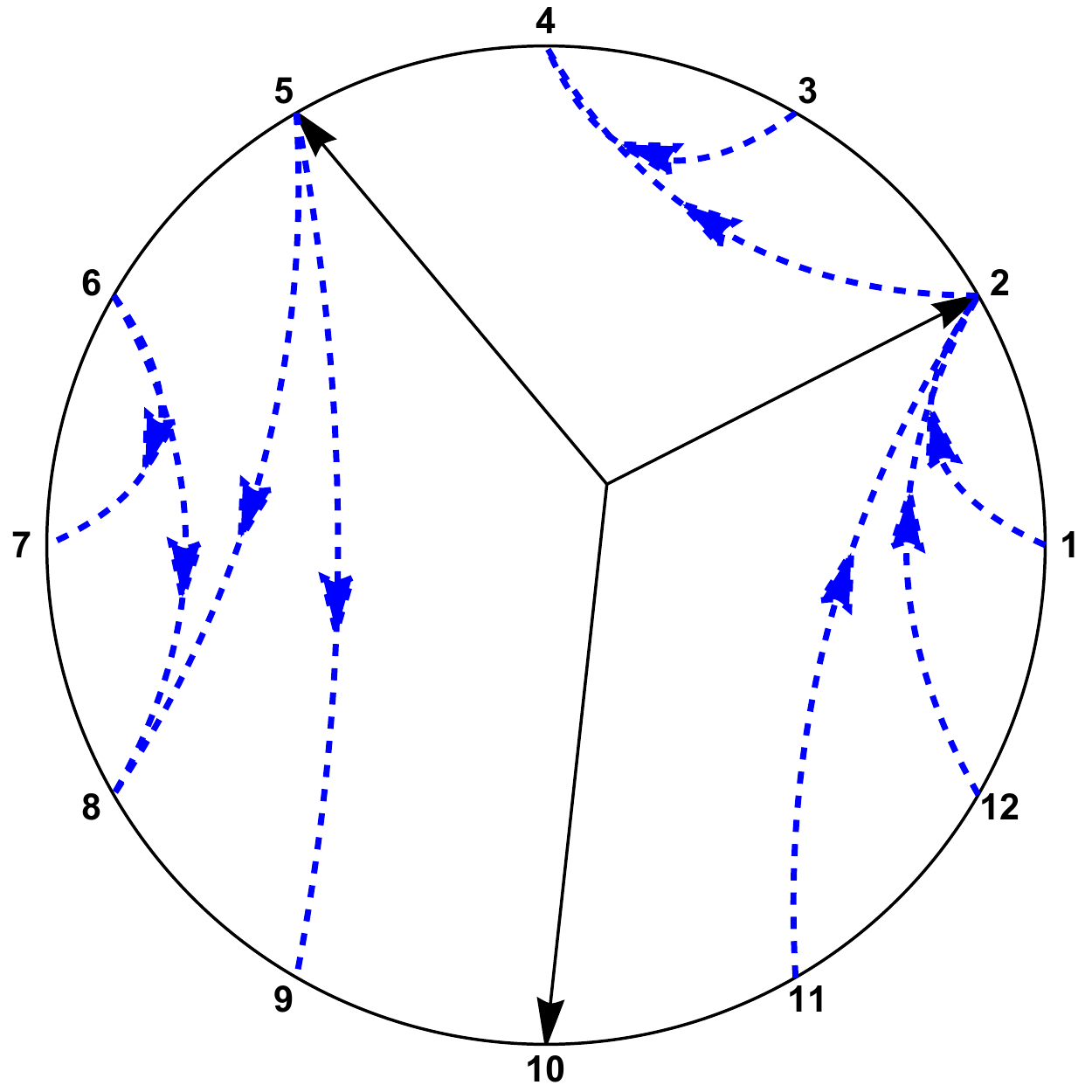}
		\caption{An arc diagram (with a counter-clockwise orientation) for the product of three amputated currents: that is, for the smoothly split $n=10$ particle partial biadjoint scalar amplitude, according to the triple (2,5,10), into a product of three currents.  The (non-amputated) currents would include the three directed arcs $(2 \rightarrow 5),(5 \rightarrow 10)$ and $(10 \rightarrow2)$.  Arcs are directed to indicate that the planar pole should be constructed from the particle labels on the segment of the (counter-clockwise directed) circle that is parallel to the directed arc.}
		\label{fig:feynmandiagramarcs}
	\end{figure}
	
	The diagrammatic scheme as in Figure \ref{fig:feynmandiagramarcs} is intended to have a double meaning.  First, it provides a natural structure which organizes the expansion of the product of the three currents in a smoothly split amplitude, as we discuss below.  Second, it generalizes a well-known construction in combinatorics, from arc diagrams on a line to now arc diagrams on a circle.  It is known that arc diagrams on the line, or in more standard terminology alternating trees, are in bijection with simplices in the Catalan triangulation of the so-called root polytope, first studied in \cite{gelfand1997combinatorics}.  There does not seem to be, in this combinatorial context, a corresponding interpretation of directed arc diagrams in a disk with oriented boundary.  In \cite{Early:2021tce} root polytopes and their triangulations were generalized using certain higher $k$ analogs of simple $A_n$ roots and noncrossing geometries.  See also \cite{Early:2018zuw}, where biadjoint partial amplitudes with one off-shell leg were studied in the context of root polytopes.  The combinatorics of arc diagrams seems reminiscent of the construction in \cite{Gao:2017dek} of so-called Cayley polytopes, in the context of disk integrals.
	
	Indeed, from Figure \ref{fig:feynmandiagramarcs} we can extract one of the contributions to the smoothly split amplitude, restricted to the split kinematic subspace $(2,5,8)$, as the product of three amputated currents,
	\begin{eqnarray*}
		m_{10}(\mathbb{I},\mathbb{I})\big\vert_{\text{split kin.}}  & = & \left(\frac{1}{s_{234}s_{34}} + \cdots \right)\left(\frac{1}{s_{67}s_{678}s_{5678}s_{56789}} + \cdots  \right) \left(\frac{1}{s_{1,2,11,12}s_{12,1,2}s_{1,2}} + \cdots  \right).
	\end{eqnarray*}

	We conclude our discussion of split kinematics with some issues that require further exploration.  We caution that we know very little about the preimage of split kinematics as a subvariety of the Cartesian product of $n$ copies of Minkowski space $\mathbb{R}^{1,D-1}$, as it is the intersection of a large number of hypersurfaces of the form 
	$$p_a\cdot p_b = -x_{a,1}x_{b,1} + \sum_{j=2}^D x_{a,j}x_{b,j}=0,$$
	where 
	$$p_a = (x_{a,1},\ldots, x_{a,D})$$
	for $a=1,\ldots, n$.
	
	Moreover, we do not know in general the minimum dimension $D$ which makes the intersection nontrivial, nor do we know the topology of the subvariety.  Such questions are beyond the scope of this paper and are left to future work.

	\section{Smoothly Splitting Biadjoint Amplitudes}\label{sec: smooth splits biadjoint scalar}
	
	In this section we prove the formula obtained by smoothly splitting biadjoint amplitudes. In many standard quantum field theory arguments Feynman diagrams make properties manifest and they are the standard tool for proofs. However, due to the semi-locality of smooth splits we choose to proceed using the Cachazo-He-Yuan (CHY) formalism. 
	
	In the CHY formalism, partial amplitudes $m_n(\mathbb{I},\mathbb{I})$ are obtained as an integration over the moduli space of $n$ punctures on $\mathbb{CP}^1$, ${\cal M}_{0,n}$, using the scattering equations \cite{Cachazo:2013gna,Cachazo:2013hca,Cachazo:2013iea,Cachazo:2014xea}. 
	
	Consider the following parameterization of ${\cal M}_{0,n}$ using inhomogeneous coordinates for the punctures
	\begin{align}
		\begin{bmatrix}
			\sigma_1 & \sigma_2 & \sigma_3 &  \sigma_4 &  \cdots   & \sigma_n \\
			1 & 1 & 1 & 1 &  \cdots &  1
		\end{bmatrix} / SL(2,\mathbb{C}).
		\label{paramCHY}
	\end{align}

	The CHY potential is defined as a function of the Pl\"ucker coordinates $|a\, b| = \sigma_a-\sigma_b$ and takes the form
	\begin{align}
		\begin{split}
			{\cal S}_n & = \sum_{a<b}s_{ab}\, \textrm{log}|a\, b| 
		\end{split}\,.
		\label{SCHY}
	\end{align}
	It is not difficult to show that ${\cal S}_n$ is invariant under $SL(2,\mathbb{C})$ transformations and therefore one can fix the location of three punctures. 
	
	We are interested in studying the behavior of amplitudes on the $(i,j,k)$ split kinematic subspace. Therefore it is natural to fix $\sigma_i = 0$, $\sigma_j=1$, and $\sigma_k=\infty$. Recall that $1\leq i<j<k\leq n$. 
	
	Here it is convenient to set $i=1$. Note that each term in the potential function \eqref{SCHY} corresponds to an entry in the matrix of Mandelstam invariants shown schematically in Figure \ref{fig:densityplot}. This already shows that the potential function splits into three parts, each corresponding to one of the diagonal blocks in Figure \ref{fig:densityplot} with extra terms corresponding to the rows and columns in the set $\{ 1,j,k\}$. More explicitly, the potential \eqref{SCHY} can be written as
	
	\begin{figure}[h!]
		\centering
		\includegraphics[width=0.45\linewidth]{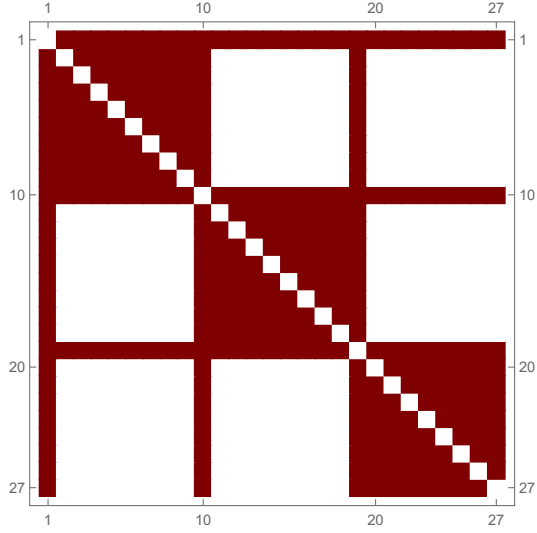}
		\caption{Structure of the matrix of Mandelstam invariants $s_{ij}$ on the split kinematic subspace. For illustration purposes a sufficiently generic case is presented. Here $n=27$ split kinematics $(1,10,19)$ is shown, having set to zero all $s_{ab}$'s in the unshaded regions.}
		\label{fig:densityplot}
	\end{figure}

	\be
	{\cal S}_n = {\cal B}_{(1,j)} + {\cal B}_{(j,k)} + {\cal B}_{(k,1)}+{\cal R}_1 +{\cal R}_j +{\cal R}_k +{\cal T}_{1j}+{\cal T}_{jk}+{\cal T}_{k1}
	\ee
	with the terms coming from the interior of the three blocks
	\be\label{blocksC} 
	{\cal B}_{(1,j)} :=\!\! \sum_{1<a<b<j} \!\! s_{ab}\, \textrm{log}|a\, b|, \quad  {\cal B}_{(j,k)} :=\!\! \sum_{j<a<b<k}\!\! s_{ab}\, \textrm{log}|a\, b|,\quad {\cal B}_{(k,1)} :=\!\! \sum_{k<a<b\leq n}\!\! s_{ab}\, \textrm{log}|a\, b|,
	\ee
	and the extra terms 
	\be\label{extraT}
	{\cal R}_1:=\sum_{a\notin \{1,j,k\}}s_{a1} \textrm{log}|a\, 1|, \quad {\cal R}_j := \sum_{a\notin \{1,j,k\}}s_{aj} \textrm{log}|a\, j|, \quad {\cal R}_k := \sum_{a\notin \{1,j,k\}}s_{ak} \textrm{log}|a\, k|,
	\ee
	\be\label{extraT2}
	{\cal T}_{1j}:=s_{1j} \textrm{log}|1\, j|, \quad {\cal T}_{jk} := s_{jk} \textrm{log}|j\, k|, \quad {\cal T}_{k1} := s_{k1} \textrm{log}|k\, 1|.
	\ee

	Using the $SL(2,\mathbb{C})$ gauge fixing described above, \eqref{extraT} and \eqref{extraT2} become
	\be\label{extraT3}
	{\cal R}_1 =\!\! \sum_{a\notin \{1,j,k\}}s_{a1} \textrm{log}(\sigma_a), \quad {\cal R}_j =\!\! \sum_{a\notin \{1,j,k\}}s_{aj} \textrm{log}(1-\sigma_a), \quad {\cal R}_k+ {\cal T}_{k1} +{\cal T}_{jk} = 0,\quad {\cal T}_{1j} = 0.
	\ee
	In the last two equations we have used momentum conservation in the form $s_{1k}+s_{2k}+\ldots + s_{nk} = 0$ and that $\textrm{log}(\sigma_1-\sigma_j) =\textrm{log}(1) = 0$, respectively. 
	
	The non-vanishing terms in \eqref{extraT3} can be redistributed into the three parts from the blocks in  \eqref{blocksC} to define
	\be\label{blocksF} 
	\begin{split}
		{\cal S}_{(1,j)} := & \!\! \sum_{1<a<b<j} \!\! s_{ab}\, \textrm{log}|a\, b| + \sum_{1<a<j}s_{a1} \textrm{log}(\sigma_a) + \sum_{1<a<j}s_{aj}\textrm{log}(1-\sigma_a), 
		\\  {\cal S}_{(j,k)} := & \!\! \sum_{j<a<b<k}\!\! s_{ab}\, \textrm{log}|a\, b|+ \sum_{j<a<k}s_{a1} \textrm{log}(\sigma_a) + \sum_{j<a<k}s_{aj}\textrm{log}(1-\sigma_a),\\ {\cal S}_{(k,1)} := & \!\! \sum_{k<a<b\leq n}\!\! s_{ab}\, \textrm{log}|a\, b|+ \sum_{k<a\leq n}s_{a1} \textrm{log}(\sigma_a) + \sum_{k<a\leq n}s_{aj}\textrm{log}(1-\sigma_a).
	\end{split}
	\ee
	Using this the CHY potential \eqref{SCHY} can be written as
	\be 
	{\cal S}_n = {\cal S}_{(1,j)} + {\cal S}_{(j,k)} + {\cal S}_{(k,1)}.
	\ee
	Close inspection of this formula reveals something remarkable. Each term only depends on the location of the non-fixed punctures within the range specified by the labels. For example, ${\cal S}_{(j,k)}$ is only a function of $\sigma_{a}$ with $j<a<k$. 
	
	Having analysed the behavior of the CHY potential function, the next step is to study the CHY integral representation of the amplitude. We start by writing the formulation with $\sigma_1,\sigma_j$ and $\sigma_k$ set to generic values,
	\begin{align}\label{chyBA}
		\begin{split}
			m_n(\mathbb{I},\mathbb{I})=\int\prod_{a=1}^n\! ' d\sigma_a\delta\left(\frac{\partial {\cal S}}{\partial \sigma_a}\right)\left(|1\, j||j\, k||k\, 1|\, \textrm{PT}_n(\mathbb{I})\right)^2\,,
		\end{split}
	\end{align}
	where the prime in the product means that $a\notin \{i,j,k\}$. Here PT stands for Parke-Taylor function or factor and it is defined as
	\begin{align}\label{PTdef}
		\begin{split}
			\textrm{PT}_n(\mathbb{I}) := \frac{1}{|1\,2||2\, 3|\cdots |n-1\, n| | n \, 1|}.
		\end{split}
	\end{align}
	Let us write the combination that appears in the integrand of \eqref{chyBA} more explicitly showing the locations of labels $1,j,k$,
	\begin{align}\label{PTA}
		\begin{split}
			|1\, j||j\, k||k\, 1| \textrm{PT}_n(\mathbb{I}) := \frac{|1\, j||j\, k||k\, 1|}{|1\,2||2\, 3|\cdots |j-1,j||j\,j+1|\cdots |k-1\, k||k\, k+1|\cdots  |n-1\, n| | n \, 1|}.
		\end{split}
	\end{align}
	Using the gauge fixing $\sigma_1 = 0$, $\sigma_j=1$, and $\sigma_k=\infty$ one finds that \eqref{PTA} becomes
	\begin{align}\label{factPT}
		\begin{split}
			& \left( \frac{1}{\sigma_2 |2\, 3|\cdots |j-2\,j-1|(\sigma_{j-1}-1)}\right)\times  \left(\frac{1}{(1-\sigma_{j+1})|j+1\, j+2| \cdots |k-2\, k-1|}\right)\times \\ & \left(\frac{1}{|k+1\, k+2|\cdots  |n-1\, n|\sigma_n}\right).
		\end{split}
	\end{align}
	Once again, each of the factors depends only on the variables in one of the three sets defined by the potentials ${\cal S}_{(1,j)}$, ${\cal S}_{(j,k)}$, and ${\cal S}_{(k,1)}$.
	
	Reorganizing the CHY integral \eqref{chyBA} one finds that it splits into three factors, i.e.
	\begin{equation}\label{identity}
		\begin{split}
			m_n(\mathbb{I},\mathbb{I})|_{\rm split\, kin.} = 
			& \left(\int\prod_{a=2}^{j-1}d\sigma_a\delta\left(\frac{\partial {\cal S}_{(1,j)}}{\partial \sigma_a}\right) \textrm{PT}_{(1,j)}\right)\left(\int\prod_{a=j+1}^{k-1}d\sigma_a\delta\left(\frac{\partial {\cal S}_{(j,k)}}{\partial \sigma_a}\right) \textrm{PT}_{(j,k)}\right) \\
			&\left(\int\prod_{a=k+1}^{n}d\sigma_a\delta\left(\frac{\partial {\cal S}_{(k,1)}}{\partial \sigma_a}\right) \textrm{PT}_{(j,k)}\right),
		\end{split}
	\end{equation}
	with $\textrm{PT}_{(1,j)}$, $\textrm{PT}_{(j,k)}$ and $\textrm{PT}_{(k,1)}$ defined as each of the factors in \eqref{factPT} respectively. 
	
	The last step is the identification of each factor in \eqref{identity} with amputated currents.  
	
	In order to complete the argument let us start by reinterpreting the potential functions ${\cal S}_{(1,j)}$, ${\cal S}_{(j,k)}$, and ${\cal S}_{(k,1)}$ in \eqref{blocksF}. 
	The first function ${\cal S}_{(1,j)}$ can be thought of as the CHY potential for a current with one off-shell particle with momentum $P_K:=-p_1-p_2-\ldots -p_j$ and gauge fixed so that $\sigma_1 = 0$, $\sigma_j = 1$ and $\sigma_K=\infty$. Here we follow the definition given by Naculich in \cite{Naculich:2015zha} and reviewed in Appendix \ref{sec: currents}. Note that we have introduced the notation $K$ for the off-shell leg and should not be confused with the $k^{\text{th}}$ particle of the original amplitude. 
	
	The second function ${\cal S}_{(j,k)}$ requires a rearrangement before it can be identified. Note that
	\be 
	\sum_{j<a<k}s_{a1}\log (\sigma_a) = 2\sum_{j<a<k}p_{a}\cdot p_1\log (\sigma_a) .
	\ee
	Using momentum conservation,
	$$p_1 = -(p_2+p_3+\cdots p_{j-1})-(p_j+p_{j+1}+\cdots +p_k)-(p_{k+1}+p_{k+2}+\cdots +p_n)$$
	and noticing that on the $(1,j,k)$ split kinematic subspace 
	$$ p_a\cdot p_1 = -p_a\cdot (p_j+p_{j+1}+\cdots +p_k) \quad \forall \; a: \; j<a<k $$
	once can write ${\cal S}_{(j,k)}$ in \eqref{blocksF} as
	\be 
	{\cal S}_{(j,k)} =\sum_{j<a<b<k}\!\! s_{ab}\, \textrm{log}|a\, b|+ \sum_{j<a<k}2p_{a}\cdot P_I \textrm{log}\, (\sigma_a) + \sum_{j<a<k}s_{aj}\textrm{log}\, (1-\sigma_a).
	\ee 
	Comparing the formula in the appendix \eqref{naculich} it is straightforward to conclude that this is the CHY potential for a current with off-shell momentum $P_I := -p_j-p_{j+1}-\cdots -p_k$ and gauge fixed so that $\sigma_I=0$, $\sigma_j = 1$ and $\sigma_k=\infty$. 
	
	Finally, the function ${\cal S}_{(k,1)}$ in \eqref{blocksF} can be written as
	\be 
	{\cal S}_{(k,1)} =\sum_{k<a<b\leq n}\!\! s_{ab}\, \textrm{log}|a\, b|+ \sum_{k<a\leq n}s_{a1} \textrm{log}\, (\sigma_a) + \sum_{k<a\leq n}2p_a\cdot P_J\textrm{log}\, (1-\sigma_a)\,,
	\ee 
	where $P_J := -p_k-p_{k+1}-\cdots -p_n-p_1$. Comparing to \eqref{naculich} one has a current gauge fixed so that $\sigma_1=0$, $\sigma_J=1$, and $\sigma_k=\infty$.
	
	Let us reinterpret the factors into which the Parke-Taylor function in Equation \eqref{factPT} decomposed, i.e. $\textrm{PT}_{(1,j)}$, $\textrm{PT}_{(j,k)}$ and $\textrm{PT}_{(k,1)}$.   
	
	Consider 
	\be
	\textrm{PT}_{(1,j)} = \left( \frac{1}{\sigma_2 |2\, 3|\cdots |j-2\,j-1|(\sigma_{j-1}-1)}\right).
	\ee
	This is indeed a standard $|1j||jK||K1|\textrm{PT}(1,2,\ldots ,j-1,j,K)$ with the gauge fixing $\sigma_1=0$, $\sigma_j = 1$ and $\sigma_K=\infty$. Likewise, 
	$$ \textrm{PT}_{(j,k)} =\left. |jk||kI||Ij|\textrm{PT}(j,j+1,\ldots ,k-1,k,I)\right|_{\sigma_I=0,\sigma_j=1,\sigma_k= \infty}$$ 
	and 
	$$\textrm{PT}_{(k,1)} =\left. |k1||1J||Jk|\textrm{PT}(k,k+1,\ldots ,n,1,J)\right|_{\sigma_1=0,\sigma_J=1,\sigma_k= \infty}.$$
	Combining all these results the final form of the biadjoint amplitude on the $(1,j,k)$ split kinematic subspace is
	\begin{equation}
		\begin{split}
			m_n(\mathbb{I},\mathbb{I})|_{\rm split\, kin.} = {\cal J}(1,2,...,j){\cal J}(j,j+1,...,k) {\cal J}(k,k+1,...,n,1).
		\end{split}
	\end{equation}
	The three amputated currents were defined in terms of Feynman diagrams in Section \ref{sec: smooth splits intro} and their CHY formulations are discussed in detail in Appendix \ref{sec: currents}.

	\section{Smoothly Splitting NLSM and Special Galileon Amplitudes}\label{sec: smooth splits others}
	
	In this section we derive and study how split kinematics induces smooth splits in two other theories of scalars that admit a CHY formulation: the $U(N)$ non-linear sigma model (NLSM) and the special Galileon.
	
	\subsection{NLSM Amplitudes}
	
	Historically, interest in the NLSM started from studying an effective field theory of interactions of Goldstone bosons known as pions \cite{Gell-Mann:1960mvl}. It is well-known that in this theory, when a single particle becomes soft, scattering amplitudes vanish implying that there must be a non-linearly realized symmetry. This phenomenon is known as the Adler zero \cite{Adler:1965ga,Susskind:1970gf}. Instead, the double soft limit is the relevant one when one tries to obtain information about the spontaneously broken symmetries of the theory \cite{Arkani-Hamed:2008owk}. The lagrangian of the NLSM can be written as \cite{Cachazo:2014xea}
	\begin{equation}
		\begin{split}
			{\cal L}_{\textrm{NLSM}}=\frac{1}{8\lambda^2}\textrm{Tr}(\partial_{\mu}\textrm{U}^{\dagger}\partial^{\mu}\textrm{U})\,,
		\end{split}
		\label{lagNLSM}
	\end{equation}
	where we have used the Cayley transform to write $\textrm{U}=(\mathbb{I}_{N\times N}+\lambda\Phi)(\mathbb{I}_{N\times N}-\lambda\Phi)^{-1}$. Here $\Phi=\phi_IT^I$ where $\phi_I$ are the scalars carrying a flavour index, $T^I$ are the $U(N)$ generators, and $\lambda$ is a constant.
	
	The CHY formula for NLSM amplitudes, which is non-vanishing only for an even number of particles, was proposed in \cite{Cachazo:2014xea} as
	\begin{equation}
		\begin{split}
			A_n^{\textrm{NLSM}}(\mathbb{I})=\int d\mu_n \textrm{PT}_n(\mathbb{I})\,\textrm{det}'\textbf{A}_n\,,
		\end{split}
		\label{nlsmchy}
	\end{equation}
	where we have defined the CHY measure $d\mu_n\equiv\prod_{a=0}^{n-3}d\sigma_a\delta\left(\frac{\partial {\cal S}}{\partial \sigma_a}\right)$ and $\textbf{A}_n$ is an $n\times n$ dimensional matrix with entries $A_{ab}\equiv \frac{s_{ab}}{\sigma_a-\sigma_b}$.
	
	In \eqref{nlsmchy}, $\textrm{det}'\textbf{A}_n$ is the reduced determinant of $\textbf{A}_n$ and is defined as 
	$$\textrm{det}'\textbf{A}_n:=\frac{1}{(\sigma_{p}-\sigma_q)^2}\,\textrm{det}\textbf{A}_n^{[p\,q]},$$
	where $\textbf{A}_n^{[p\,q]}$ is the submatrix of $\textbf{A}_n$ defined by removing the $p^{\rm th}$ and $q^{\rm th}$ rows and columns. This reduction is necessary since the matrix $\textbf{A}_n$ has co-rank 2 on the support of the delta functions in the measure. It is not difficult to show that $\textrm{det}'\textbf{A}_n$ is independent of the choice of $p$ and $q$.
	
	To start the study of the behaviour of NLSM amplitudes under split kinematics, let us first repeat the argument that led to the conclusion that only $d=3$-splits are possible for the biadjoint amplitude presented in \eqref{degreeC}. NLSM amplitudes have degree one in Mandelstam invariants (or equivalently, mass dimension two) for any values of $n$. This immediately implies that it is impossible to smoothly split NLSM amplitudes in terms of NLSM amputated currents which also have the same degree as the amplitudes. This leads to the expection that NLSM amplitudes should vanish on split kinematics. However, considering explicit examples reveals a surprising result. Directly evaluating the $n=8$ NLSM amplitude on split $(1,3,6)$ kinematics gives rise to
	\begin{equation}
		\begin{split}
			A^{\textrm{NLSM}}_8(\mathbb{I})\vert_{(1,3,6)}=\left(s_{12}+s_{23}\right)\left(\frac{s_{34}+s_{45}}{s_{345}}+\frac{s_{45}+s_{56}}{s_{456}}-1\right)\left(\frac{s_{67}+s_{78}}{s_{678}}+\frac{s_{78}+s_{81}}{s_{781}}-1\right)\,.
		\end{split}
		\label{exnlsm8}
	\end{equation}
	The first factor has the form of an $n=4$ NLSM amputated current and hence degree one. The second and third factors in the split do not have the form of NLSM amplitudes. In fact, five-point NLSM amplitudes vanish. These new currents therefore belong to a theory that extends the NLSM and have dimension zero leading to a consistent split. It is surprising that by simply exploring a region of the space of Mandelstam invariants one can find amplitudes of a different theory emerging from those of the original one. Exactly the same phenomenon was observed by Cachazo, Cha, and Mizera \cite{Cachazo:2016njl} when they computed the coefficient of the Adler zero and found exactly the same kind of extended amplitudes. These so-called {\it mixed amplitudes} involve NLSM particles (pions) and biadjoint scalars. 
	
	The particular currents in \eqref{exnlsm8} correspond to mixed 5-point amputated currents of pions and biadjoint scalars \cite{Cachazo:2016njl}, where particles $1$, $3$, $6$ and the new off-shell ones with momenta $-(p_3+p_4+p_5+p_6)$ and $-(p_6+p_7+p_8+p_1)$ are identified with biadjoint scalars, while the rest are NLSM scalars.
	
	In fact, we will show that smoothly splitting NLSM amplitudes either vanish or produce one amputated current of the standard NLSM theory as well as two amputated currents in the extended NLSM theory, i.e. with an odd number of biadjoint scalars and an even number of NLSM scalars.

	The CHY formulation of all mixed NLSM amplitudes corresponding to the extended theory was found in \cite{Cachazo:2016njl}. It contains an additional $U(\tilde{N})$ flavour group and a biadjoint scalar field. Its CHY formula is 
	\begin{equation}
		\begin{split}
			A_n^{\textrm{NLSM}\oplus\phi^3}(\mathbb{I}\vert\beta) =
			\int d\mu_n \textrm{PT}_n(\mathbb{I})\, \textrm{PT}_{\beta}\, \textrm{det}\textbf{A}_{\bar{\beta}}\, .
		\end{split}
		\label{nlsmmix}
	\end{equation}
	The notation here requires some explanation. Both species of particles share the canonical ordering, $\mathbb{I}$, but the biadjoint scalars also respect the ordering $\beta$ in the $U(\tilde{N})$ flavour group indices. Here $\bar{\beta}$ represents the particles in the complement of the set $\beta$ in $[n]$. It is also common in the literature to use $\textrm{PT}_{\beta}\equiv\textrm{PT}(\beta)$ in order to avoid cluttering of the formulas. 
	
	Let us present the general result for 3-splits of NLSM amplitudes postponing the proof to Section \ref{sec: NLSM proof}. 
	
	At first sight there seem to be four cases to consider. As in previous sections, cyclic invariance allows us to study $(1,j,k)$-split kinematics without losing generality. The cases correspond to the different choices for the parity of $j$ and $k$. However, one can check that all choices except for $j\in 2\mathbb{Z}+1$ and $k\in 2\mathbb{Z}+1$, lead to one current with an even number of points and two with an odd number of points. The case $j\in 2\mathbb{Z}+1$ and $k\in 2\mathbb{Z}+1$ requires all three currents to have an even number of points. This, however, is not possible as discussed above and leads to a vanishing result, i.e. a zero of the amplitude, as shown in Section \ref{detproof}.
	
	Let us present the explicit result for $j\in 2\mathbb{Z}+1$ and $k\in 2\mathbb{Z}$, knowing that other cases can be obtained by reflections and relabeling,
	\begin{equation}
		\begin{split}
			&  A_n^{\textrm{NLSM}}(\mathbb{I})|_{\rm split\, kin.}
			= {\cal J}_{j+1}^{\textrm{NLSM}}(\mathbb{I})\times {\cal J}_{k-j+2}^{\textrm{NLSM}\oplus\phi^3}(\mathbb{I}\vert\beta_1)\times {\cal J}_{n-k+3}^{\textrm{NLSM}\oplus\phi^3}(\mathbb{I}\vert\beta_2)\,,
		\end{split}
		\label{NLSMsplit}
	\end{equation}
	Here $\beta_1 =\{ I,j,k \} $, $\beta_2 = \{ 1,J,k \}$, with $I,J$ denoting off-shell legs, and with the currents defined using the CHY formula \eqref{nlsmmix} as explained in more detail in Section \ref{sec: NLSM proof}. A simple argument using degree (or mass dimension) counting reveals that having three biadjoint particles in each mixed current is the only possibility\footnote{The reason why $|\beta_1| = |\beta_2|=3$ is the following. The degree of an amputated mixed current is $3-n+|\bar{\beta}_i|$ and that of a NLSM amputated currents is $1$. Using this in \eqref{NLSMsplit} imposes the constraint 
		$1=j-n+2+|\bar{\beta}_1|+|\bar{\beta}_2|$. Since $|\beta_1|+|\bar{\beta}_1|=k-j+2$ and $|\beta_2|+|\bar{\beta}_2|=n-k+3$, it must be that $|\beta_1|+|\beta_2|=6$. Mixed amplitudes only exist for $|\beta_i|>2$ and therefore $|\beta_1|=|\beta_2|=3$.}.
	
	\subsection{Special Galileon Amplitudes}
	
	The second theory we study in this section is the special Galileon, which was discovered in \cite{Cachazo:2014xea} (see also \cite{Cheung:2014dqa,Hinterbichler:2015pqa}) and whose CHY formula is given by 
	
	\begin{equation}
		\begin{split}
			A_n^{\textrm{sGal}}=\int d\mu_n (\textrm{det}'\textbf{A}_n)^2\,,
		\end{split}
		\label{sgalchy}
	\end{equation}
	where $d\mu_n$ is the same CHY measure used in other theories and $\textrm{det}'\textbf{A}_n$ is the same reduced determinant appearing in the NLSM CHY formula.
	
	This theory is a special case of some scalar theories known as Galileon theories, which have appeared in different contexts, e.g. in cosmology and in the decoupling limit of massive gravity \cite{Hinterbichler:2011tt,Dvali:2000hr,Kampf:2014rka}. The general Galileon lagrangian is given by
	
	\begin{equation}
		\begin{split}
			{\cal L}_{\textrm{Gal}}=-\frac{1}{2}\partial_{\mu}\phi\partial^{\mu}\phi+\sum_{m=3}^{\infty}g_m\hspace{0.5mm}\phi\hspace{0.5mm}\textrm{det}\{\partial^{\mu_ a}\partial_{\nu_b}\phi\}_{a,b=1}^{m-1}\,,
		\end{split}
		\label{lagGal}
	\end{equation}
	which computes non-vanishing amplitudes for any number of particles. However, the special Galileon amplitude \eqref{sgalchy} vanishes for an odd number of particles. It also vanishes when a single particle becomes soft. 
	
	Special Galileon amplitudes have degree $n-1$ in the kinematic invariants, i.e. they have mass dimension $2(n-1)$. Once again the same analysis as done in \eqref{degreeC} reveals that it is impossible to find a smooth splitting of special Galileon amplitudes in terms of special Galileon amputated currents which also have the same degree as the amplitudes. This again leads to the expectation that special Galileon amplitudes should vanish on split kinematics. 
	
	Another reason not to expect a smooth splitting is that, unlike biadjoint scalar and NLSM amplitudes, special Galileon particles do not have any flavour structure and hence no ordering, i.e., the complete permutation invariant amplitude must be considered\footnote{Note that biadjoint and NLSM amplitudes are also permutation invariant since their fields are bosons. However, the flavour structure allows for a decomposition in terms of color-ordered partial amplitudes.}. This implies that it contains a permutation invariant set of poles. This means that the Mandelstam invariants set to zero in a given split kinematics point could be producing singularities in the amplitude. Indeed, this is the case: some individual Feynman diagrams do diverge.  
	
	All this makes it surprising that special Galileon amplitudes smoothly split. Moreover, it is by using its CHY formulation, which re-sums Feynman diagrams, that the behavior on split kinematics is most easily understood. For this reason, we do not need to take a limit to produce smooth splits. Instead, smooth splits appear directly when imposing split kinematics to its CHY formula. 
	
	From the NSLM amplitude discussion it is reasonable to expect that special Galileon amplitudes split into products of mixed amputated currents. We recall the CHY formula for the most general mixed amplitudes, which now involve all three kinds of particles discussed so far \cite{Cachazo:2016njl},
	\begin{equation}
		\begin{split}
			A_n^{\textrm{sGal}\oplus\textrm{NLSM}^2\oplus\phi^3}(\alpha\vert\beta)=\int d\mu_n \left(\textrm{PT}_{\alpha}\,\textrm{det}\textbf{A}_{\bar{\alpha}}\right)\left(\textrm{PT}_{\beta}\, \textrm{det}\textbf{A}_{\bar{\beta}} \right)\, .
		\end{split}
		\label{sgalmixnlsmv2}
	\end{equation}
	This extended theory contains a $U(N)\times U(\tilde{N})$ biadjoint scalar and a NLSM field for each of the two flavour groups. Here the biadjoint scalars correspond to labels $\alpha\cap\beta$ while the special Galileon particles correspond to labels $\bar{\alpha}\cap\bar{\beta}$. The $U(N)$ and $U(\tilde{N})$ NLSM particles correspond to $\alpha\cap\bar{\beta}$ and $\bar{\alpha}\cap\beta$, respectively. 
	
	Once again, the behavior of sGal amplitudes on $(1,j,k)$-split kinematics depends on the parity of $j$ and $k$. The amplitudes vanish when both $j$ and $k$ are odd and splits in terms of an amputated current of the original theory times two mixed currents corresponding to mixed amplitudes of the special form when $\alpha=\beta$ in \eqref{sgalmixnlsmv2}, i.e.
	\begin{equation}
		\begin{split}
			A_n^{\textrm{sGal}\oplus\phi^3}(\beta)=\int d\mu_n \textrm{PT}_{\beta}^2(\textrm{det}\textbf{A}_{\bar{\beta}})^2\,.
		\end{split}
		\label{sgalmixnlsm}
	\end{equation}

	The final formula for $(1,j,k)$-split kinematics with $j\in 2\mathbb{Z}+1$ and $k\in 2\mathbb{Z}$, knowing that other cases can be obtained by reflections and relabeling, is
	\begin{equation}
		\begin{split}
			A_n^{\textrm{sGal}}|_{\textrm{split kin.}}= {\cal J}_{j+1}^{\textrm{sGal}}\times {\cal J}_{k-j+2}^{\textrm{sGal}\oplus\phi^3}(\beta_1)\times {\cal J}_{n-k+3}^{\textrm{sGal}\oplus\phi^3}(\beta_2)\,.
		\end{split}
		\label{splitsgal}
	\end{equation}
	Here $\beta_1 =\{ I,j,k \} $, $\beta_2 = \{ 1,J,k \}$, with $I,J$ denoting off-shell legs. We present the proof of this formula in Section \ref{sec: sGal Proof}.

	\subsection{Behaviour of $\textrm{det}'\textbf{A}_n$ on Split Kinematics}\label{detproof}

	In the following subsections we use the CHY argument seen for the biadjoint scalar theory in Section \ref{sec: smooth splits biadjoint scalar} to derive how 3-splits appear in NLSM and special Galileon theories under split kinematics. In order to achieve it, we first have a look at the behavior of the reduced determinant that enters into the CHY formulation of these theories, under split kinematics.
	
	Recall that the reduced determinant is independent of the choice of removing any two rows and columns. Therefore, we can remove row and column $1$ and we still have to remove one more row and column.
	
	\begin{figure}[H]
		\includegraphics[width=1\linewidth]{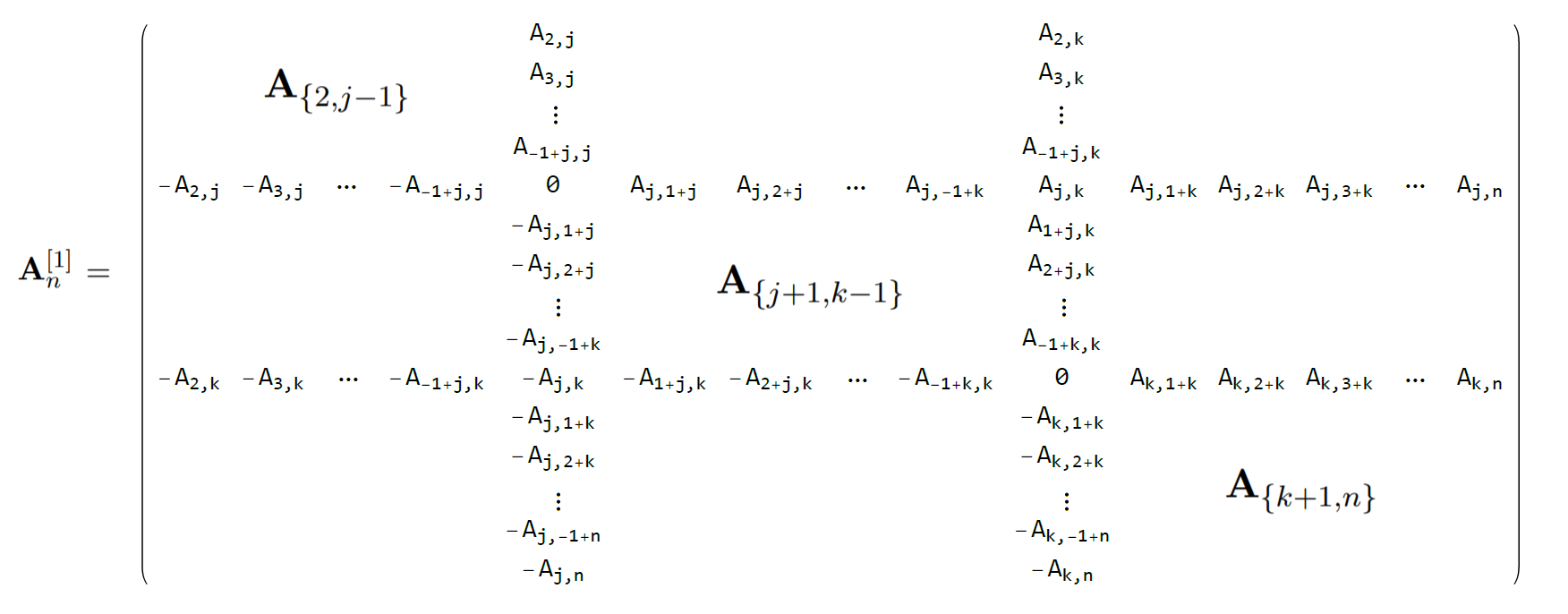}
		\centering
		\caption{General form of the matrix $\textbf{A}_n^{[1]}$, i.e. when we remove row and column 1.}
		\label{fig:matrixA}
	\end{figure}
	
	Without loss of generality, consider again the split kinematics $(1,j,k)$. Under this kinematics, the matrix $\textbf{A}_n$ after removing row and column $1$ has the form of the matrix in Figure \ref{fig:matrixA}, where the entries are $A_{a,b}\equiv\frac{s_{ab}}{\sigma_a-\sigma_b}$ and $\textbf{A}_{\{a,b\}}$ are matrices defined as
	
	\begin{eqnarray}
		\textbf{A}_{\{a,b\}} = \begin{bmatrix}
			0 & A_{a,a+1} & A_{a,a+2} & \cdots & A_{a,b} \\
			-A_{a,a+1} & 0 & A_{a+1,a+2} & \cdots & A_{a+1,b}  \\
			\vdots & \vdots & \ddots & \vdots & \vdots  \\
			-A_{a,b} & -A_{a+1,b} & \cdots & -A_{b-1,b} & 0  \\
		\end{bmatrix}\,.
		\label{matrixSk}
	\end{eqnarray}
	We point out that $\textbf{A}_{\{1,n\}}=\textbf{A}_n$ in this notation. The rest of the entries are zero. For the argument we will use the following lemma:
	\begin{lem}
		\label{detLemma}
		Let $M\in \mathbb{C}^{2m\times 2m}$ be antisymmetric and $L\in \mathbb{C}^{r\times r}$ generic, then  
		\begin{equation}
			\textrm{det}\left[ 
			\begin{array}{c|c} 
				\large{M} & \begin{array}{ccccc}
					0 & 0 & 0 & \cdots & 0  \\
					0 & 0 & 0 & \cdots & 0 \\
					\vdots & \vdots & \vdots & \ddots & \vdots \\
					0 & 0 & 0 & \cdots & 0 \\
					c_1 & c_2 & c_3 & \cdots & c_{r}
				\end{array} \\ 
				\hline 
				\begin{array}{ccccc}
					0 & 0 & \cdots & 0 & d_1  \\
					0 & 0 & \cdots & 0 & d_2 \\
					\vdots & \vdots & \ddots & \vdots & \vdots \\
					0 & 0 & \cdots & 0 & d_r\\
				\end{array} & \large{L}
			\end{array} 
			\right] =\textrm{det}(M)\textrm{det}(L)
			\label{det}
		\end{equation}
		for any values of $d_a$ and $c_a$.
	\end{lem}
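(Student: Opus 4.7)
The approach is to recognize that both off-diagonal blocks have rank at most one and are concentrated in the last row/column of the $M$-block, which makes the Schur complement collapse thanks to the antisymmetry of $M$.

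First, I would write the two off-diagonal blocks in rank-one form as $X = e_{2m}\, c^{T}$ and $Y = d\, e_{2m}^{T}$, where $c = (c_1,\ldots,c_r)^{T}$, $d = (d_1,\ldots,d_r)^{T}$, and $e_{2m}$ is the standard basis vector pointing to the last row/column of the $M$-block. This just records that the only nonzero entries of $X$ lie in its $2m$-th row and those of $Y$ lie in its $2m$-th column.

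Next, assuming $M$ is invertible, I would apply the standard Schur-complement formula
$$\det\!\begin{pmatrix} M & X \\ Y & L \end{pmatrix} \;=\; \det(M)\,\det\!\bigl(L - Y M^{-1} X\bigr),$$
and substitute the rank-one factorizations to get
$$Y M^{-1} X \;=\; d\,e_{2m}^{T} M^{-1} e_{2m}\,c^{T} \;=\; (M^{-1})_{2m,\,2m}\,d\,c^{T}.$$
The crucial observation, and really the only nontrivial step in the proof, is that $M^{-1}$ is itself antisymmetric whenever $M$ is invertible and antisymmetric, so all its diagonal entries vanish; in particular $(M^{-1})_{2m,2m}=0$. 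Hence $Y M^{-1} X = 0$ and $\det\bigl(L - YM^{-1}X\bigr) = \det(L)$, giving $\det(N) = \det(M)\det(L)$ on the open locus where $M$ is invertible.

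Finally, to remove the genericity assumption on $M$, I would note that both sides of \eqref{det} are polynomials in the entries of $M$, $L$, $c$, $d$, and that invertible antisymmetric matrices are Zariski dense in the space of $2m\times 2m$ antisymmetric matrices (their complement is the Pfaffian hypersurface $\mathrm{Pf}(M)=0$). Agreement on a dense set forces equality as polynomials, so the identity holds for every antisymmetric $M$. The only real content of the argument is the vanishing of $(M^{-1})_{2m,2m}$, which is also the reason the hypothesis that $M$ be antisymmetric of \emph{even} dimension is essential.
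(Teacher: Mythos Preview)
Your proof is correct and takes a genuinely different route from the paper's. The paper expands the determinant along the $2m$-th column: each term involving a $d_a$ produces a cofactor whose upper-left block is the $(2m-1)\times(2m-1)$ antisymmetric submatrix $P$ obtained by deleting the last row and column of $M$; since odd-dimensional antisymmetric matrices have vanishing determinant, those terms drop out, and by symmetry so do the $c_a$-contributions, after which one sets $c=d=0$ and reads off $\det(M)\det(L)$ from block-diagonality. You instead invoke the Schur complement and the observation that $(M^{-1})_{2m,2m}=0$ because $M^{-1}$ inherits antisymmetry from $M$. The two arguments are secretly the same fact viewed from opposite ends: by the cofactor formula $(M^{-1})_{2m,2m} = \det(P)/\det(M)$, so your vanishing diagonal entry \emph{is} the paper's vanishing odd minor. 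The paper's route is the more elementary of the two --- no matrix inverse, no Zariski-density step --- while yours is more structural, making the rank-one nature of the off-diagonal blocks explicit and isolating the single scalar $(M^{-1})_{2m,2m}$ on which everything hinges.
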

	
	The proof of the lemma is very simple and we present it in Appendix \ref{apA}.
	
	Now recall from the CHY proof in the biadjoint scalar that the potential splits into three terms ${\cal S}_{(1,j)}$, ${\cal S}_{(j,k)}$ and ${\cal S}_{(k,1)}$, where ${\cal S}_{(1,j)}$ produces an amputated current with $j+1$ legs, ${\cal S}_{(j,k)}$ produces an amputated current with $k-j+2$ legs and ${\cal S}_{(k,1)}$ produces an amputated current with $n-k+3$ legs. Also recall that for non-vanishing NLSM and special Galileon amplitudes $n$ is always even. 
	
	Let us consider the case in which $j\in 2\mathbb{Z}+1$ and $k\in 2\mathbb{Z}$. Motivated by the fact that in the following subsections we send puncture $\sigma_k$ to infinity, here we remove row and column $k$ from the matrix to end up with the that in Figure \ref{fig:matrixj}, where the upper-left block $\textbf{A}_{\{2,j\}}$ is $(j-1)\times (j-1)$ dimensional, and therefore even-dimensional. We also note that $\textbf{A}_{\{j+1,k-1\}}$ has dimension $k-j-1$ and that $\textbf{A}_{\{k+1,n\}}$ has dimension $n-k$. Given the statement \eqref{det} above, and the fact that the determinant of a block-diagonal matrix is the product of the determinants of each block, we know that $\textrm{det}\textbf{A}_n^{[1k]}=\textrm{det}\textbf{A}_{\{2,j\}}\,\textrm{det}\textbf{A}_{\{j+1,k-1\}}\,\textrm{det}\textbf{A}_{\{k+1,n\}}$.

	\begin{figure}[h!]
		\includegraphics[width=1\linewidth]{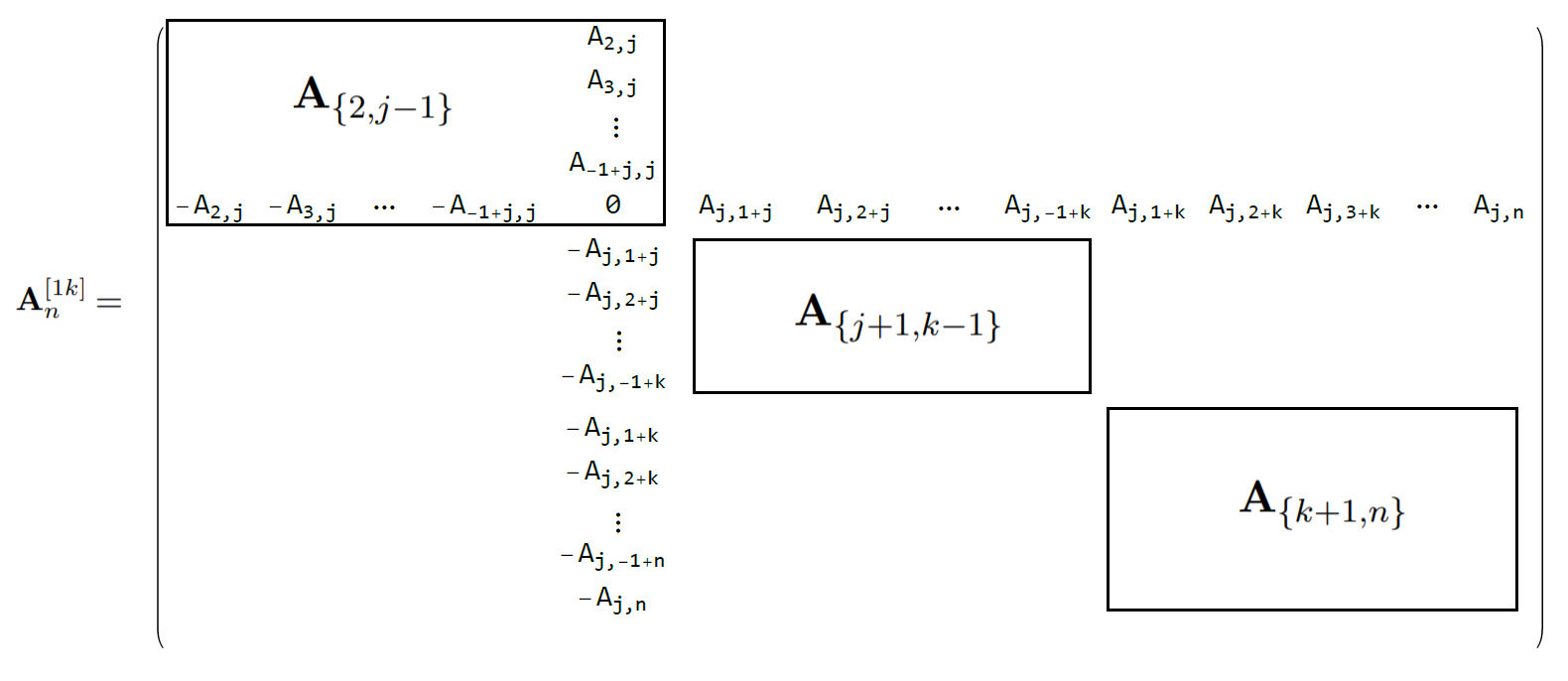}
		\centering
		\caption{General form of the matrix $\textbf{A}_{n}^{[1k]}$ where we emphasize the three different blocks that play the role in its determinant.}
		\label{fig:matrixj}
	\end{figure}
	
	Now notice that if $k$ is even then $\textbf{A}_{\{j+1,k-1\}}$ and $\textbf{A}_{\{k+1,n\}}$ are even-dimensional. In this case the block $\textbf{A}_{\{2,j\}}$ will give rise to the NLSM or special Galileon amputated current, whilst each of the two blocks $\textbf{A}_{\{j+1,k-1\}}$ and $\textbf{A}_{\{k+1,n\}}$ that are embedded into a bigger one will give rise to the mixed amputated currents. 
	
	If $k$ is odd then $\textbf{A}_{\{j+1,k-1\}}$ and $\textbf{A}_{\{k+1,n\}}$ are odd-dimensional and therefore the whole determinant vanishes since the determinant of an odd-dimensional antisymmetric matrix is zero. What this is telling us is the following. When $j$ and $k$ are odd, we know that all of the three amputated currents will have an even number of external particles, since $n$ is even. Hence, the determinant is protecting the whole object from becoming a product of only non-mixed amputated currents!

	\subsection{Proof for NLSM Amplitudes}\label{sec: NLSM proof}
	
	Now we are ready to prove how 3-splits are produced in NLSM amplitudes. Without loss of generality, we consider again the split kinematics $(1,j,k)$. Recall from Section \ref{sec: smooth splits biadjoint scalar} that under this kinematics the CHY potential ${\cal S}_n$ splits into ${\cal S}_{(1,j)}$, ${\cal S}_{(j,k)}$ and ${\cal S}_{(k,1)}$, which are the potentials given by the parameterizations \eqref{matrix1}, \eqref{matrix2} and \eqref{matrix3}, respectively, with their corresponding particle identifications
	
	\begin{equation}
		{\cal S}_{(1,j)}:\begin{blockarray}{cccccccc}
			\textcolor{Maroon}{1} &  \textcolor{Maroon}{2} &  \textcolor{Maroon}{3} &  \textcolor{Maroon}{4} &  &  \textcolor{Maroon}{j-1} & \textcolor{Maroon}{j} &  \textcolor{Maroon}{K} \\
			\begin{block}{[cccccccc]}
				0 & \sigma_2 & \sigma_3 & \sigma_4 & \cdots & \sigma_{j-1} & 1 & 1 \\
				1 & 1 & 1 & 1 & \cdots & 1 & 1 & 0   \\
			\end{block}
		\end{blockarray}\,,
		\label{matrix1}
	\end{equation}
	
	\begin{equation}
		{\cal S}_{(j,k)}:\begin{blockarray}{ccccccc}
			\textcolor{Maroon}{I} &  \textcolor{Maroon}{j} &  \textcolor{Maroon}{j+1} &  \textcolor{Maroon}{j+2} &  &  \textcolor{Maroon}{k-1} & \textcolor{Maroon}{k} \\
			\begin{block}{[ccccccc]}
				0 & 1 & \sigma_{j+1} & \sigma_{j+2} & \cdots & \sigma_{k-1} & 1  \\
				1 & 1 & 1 & 1 & \cdots & 1 & 0  \\
			\end{block}
		\end{blockarray}
		\label{matrix2}
	\end{equation}
	and
	
	\begin{equation}
		{\cal S}_{(k,1)}: \begin{blockarray}{ccccccc}
			\textcolor{Maroon}{J} &  \textcolor{Maroon}{k} &  \textcolor{Maroon}{k+1} &
			\textcolor{Maroon}{k+2} & &  \textcolor{Maroon}{n} &
			\textcolor{Maroon}{1}\\
			\begin{block}{[ccccccc]}
				1 & 1 & \sigma_{k+1} & \sigma_{k+2} & \cdots & \sigma_n & 0  \\
				1 & 0 & 1 & 1 & \cdots & 1 & 1   \\
			\end{block}
		\end{blockarray}\,.
		\label{matrix3}
	\end{equation}
	Let us consider again $j\in 2\mathbb{Z}+1$ and $k\in 2\mathbb{Z}$, without loss of generality.
	From Section \ref{detproof} we know that the determinant also splits like $\textrm{det}\textbf{A}_n^{[1k]}=\textrm{det}\textbf{A}_{\{2,j\}}\,\textrm{det}\textbf{A}_{\{j+1,k-1\}}\,\textrm{det}\textbf{A}_{\{k+1,n\}}$.
	
	Given the above separation of the moduli space and that of $\textrm{det}\textbf{A}_n^{[1k]}$, one can identify every factor in the smooth split with an amputated current. Namely, we will note that ${\cal S}_{(1,j)}$ generates an amputated current with an even number of particles. The other two factors, given by ${\cal S}_{(j,k)}$ and ${\cal S}_{(k,1)}$, will correspond to amputated currents with an odd number of particles, and therefore are mixed amputated currents.
	
	Let us see this in more detail. Before going to the split kinematics subspace, and after gauge fixing the punctures $\sigma_1=0$, $\sigma_j=1$ and $\sigma_k=\infty$, the NLSM CHY formula \eqref{nlsmchy} picks up two copies of the Fadeev-Popov factor $|1j||jk||k1|$ and is expressed as
	
	$$A_n^{\textrm{NLSM}}(\mathbb{I})=\int d\mu_n(|1j||jk||k1|)^2\textrm{PT}_n(\mathbb{I})\textrm{det}'\textbf{A}_n^{[1k]}\,.$$ One copy of the Fadeev-Popov factor cancels with $\textrm{det}'\textbf{A}_n^{[1k]}=\frac{\textrm{det}\textbf{A}_n^{[1k]}}{(\sigma_1-\sigma_k)^2}$ to produce a finite object. The second copy combines with the Parke-Taylor to produce the neat separation shown in Equation \eqref{factPT}, given by the product of Parke-Taylors $\textrm{PT}_{(1,j)}\times\textrm{PT}_{(j,k)}\times\textrm{PT}_{(k,1)}$ defined in Section \ref{sec: smooth splits biadjoint scalar}. 
	
	Now we go to the split kinematics subspace $(1,j,k)$ with $j\in 2\mathbb{Z}+1$ and $k\in 2\mathbb{Z}$. Recall that in this kinematics the determinant of the original matrix splits as $\textrm{det}\textbf{A}_n^{[1k]}=\textrm{det}\textbf{A}_{\{2,j\}}\,\textrm{det}\textbf{A}_{\{j+1,k-1\}}\,\textrm{det}\textbf{A}_{\{k+1,n\}}$. This implies that in this subspace the NLSM amplitude separates into three pieces
	
	$$\left(\int d\mu_{(1,j)}\textrm{PT}_{(1,j)}\textrm{det}\textbf{A}_{\{2,j\}}\right)\left(\int d\mu_{(j,k)}\textrm{PT}_{(j,k)}\textrm{det}\textbf{A}_{\{j+1,k-1\}}\right)\left(\int d\mu_{(k,1)}\textrm{PT}_{(k,1)}\textrm{det}\textbf{A}_{\{k+1,n\}}\right)$$ where $d\mu_{(a,b)}$ is the CHY measure defined by ${\cal S}_{(a,b)}$. Notice that any dependence on $\sigma_k$ has disappeared.
	
	Let us first analyze the first factor in detail. From \eqref{matrix1} and the definition of the reduced determinant we know that $\textrm{det}'\textbf{A}_{{\cal S}_{(1,j)}}=\frac{\textrm{det}\textbf{A}_{\{2,j\}}}{(\sigma_1-\sigma_K)^2}$ where $\textbf{A}_{{\cal S}_{(1,j)}}$ is the matrix with elements $\frac{s_{ab}}{\sigma_{a}-\sigma_b}$ generated by \eqref{matrix1}. We also note that 
	
	$$\textrm{PT}_{(1,j)}=|1j||jK||K1|\,\textrm{PT}(12\cdots jK)\vert_{\sigma_1=0,\,\sigma_j=1,\,\sigma_K=\infty}\,.$$ This implies that if we start with the expression 
	
	$$\int d\mu_{(1,j)}(|1j||jK||K1|)^2\,\textrm{PT}(12\cdots jK)\textrm{det}'\textbf{A}_{{\cal S}_{(1,j)}}$$ as we had gauge-fixed punctures $\sigma_1$, $\sigma_j$ and $\sigma_K$, then $\frac{1}{(\sigma_1-\sigma_K)^2}$ is what is needed to combine with one copy of the Fadeev-Popov factor $|1j||jK||K1|$ to make the expression finite when $\sigma_K=\infty$, which becomes
	
	$$\int d\mu_{(1,j)}\textrm{PT}_{(1,j)}\textrm{det}\textbf{A}_{\{2,j\}}\,.$$
	Additionally, from \eqref{matrix2} one can see that if the set $\beta_1=\{I,j,k\}$ is identified with the biadjoints, where the complement is given by $\bar{\beta}_1=\{j+1,...,k-1\}$, then we have $\textrm{det}\textbf{A}_{\bar{\beta}_1}=\textrm{det}\textbf{A}_{\{j+1,k-1\}}$. Similarly, from \eqref{matrix3}, if the set $\beta_2=\{1,J,k\}$ is identified with the biadjoints, whose complement is given by $\bar{\beta}_2=\{k+1,...,n\}$, we have $\textrm{det}\textbf{A}_{\bar{\beta}_2}=\textrm{det}\textbf{A}_{\{k+1,n\}}$.
	
	Hence, we see from \eqref{nlsmchy} and \eqref{nlsmmix} that we end up with the 3-split
	\begin{equation}
		\begin{split}
			A_n^{\textrm{NLSM}}(\mathbb{I})|_{\rm split\, kin.}\! =\! & \overbrace{\left(\int d\mu_{(1,j)}(|1j||jK||K1|)^2\,\textrm{PT}(12\cdots jK) \textrm{det}'\textbf{A}_{{\cal S}_{(1,j)}}\right)}^{{\cal J}_{j+1}^{\textrm{NLSM}}(\mathbb{I})}\\
			& \times\underbrace{\left(\int d\mu_{(j,k)}\textrm{PT}_{(j,k)} \textrm{PT}_{\beta_1}\,\textrm{det}\textbf{A}_{\bar{\beta}_1}\right)}_{{\cal J}_{k-j+2}^{\textrm{NLSM}\oplus\phi^3}(\mathbb{I}\vert\beta_1)}\times\underbrace{\left(\int d\mu_{(k,1)}\textrm{PT}_{(k,1)} \textrm{PT}_{\beta_2}\,\textrm{det}\textbf{A}_{\bar{\beta}_2}\right)}_{{\cal J}_{n-k+3}^{\textrm{NLSM}\oplus\phi^3}(\mathbb{I}\vert\beta_2)}\,.
		\end{split}
		\label{516}
	\end{equation} 
	To conclude, we note that, given that the only particles we identify with the biadjoint scalars are contained in the set $\{1,j,k,I,J,K\}$, since every current will contain three of these particles, it follows that we will always have 3 biadjoints in the mixed amputated currents. In fact, the only particle in this set which is not identified with a biadjoint scalar corresponds to the off-shell particle in the non-mixed current. This implies that the non-mixed current will contain two biadjoints and therefore its expression is equivalent to that of a current with only pions.
	
	\subsection{Proof for Special Galileon Amplitudes}\label{sec: sGal Proof}
	
	In this subsection we show that special Galileon amplitudes smoothly split under split kinematics. We make use of the fact that special Galileon amplitudes admit a CHY formulation to derive this behaviour in a similar fashion as with the NLSM amplitudes.
	
	Let us consider again the case with $j\in 2\mathbb{Z}+1$ and $k\in 2\mathbb{Z}$ without loss of generality and recall the separation of moduli spaces given in \eqref{matrix1}, \eqref{matrix2} and \eqref{matrix3}. Also, from Section \ref{detproof} we know that the determinant also splits like $\textrm{det}\textbf{A}_n^{[1k]}=\textrm{det}\textbf{A}_{\{2,j\}}\,\textrm{det}\textbf{A}_{\{j+1,k-1\}}\,\textrm{det}\textbf{A}_{\{k+1,n\}}$. For the same reason as in the previous section, we identify again $\textrm{det}'\textbf{A}_{{\cal S}_{(1,j)}}=\frac{\textrm{det}\textbf{A}_{\{2,j\}}}{(\sigma_1-\sigma_K)^2}$. Also, given that $\beta_1=\{I,j,k\}$ and $\beta_2=\{1,J,k\}$, we can identify the determinants  $\textrm{det}\textbf{A}_{\bar{\beta}_1}=\textrm{det}\textbf{A}_{\{j+1,k-1\}}$ and $\textrm{det}\textbf{A}_{\bar{\beta}_2}=\textrm{det}\textbf{A}_{\{k+1,n\}}$. A similar analysis as in Section \ref{sec: NLSM proof} leads to
	\begin{equation}
		\begin{split}
			A_n^{\textrm{sGal}}|_{\textrm{split kin.}}
			= &  \overbrace{\left(\int d\mu_{(1,j)} (|1j||jK||K1|\,\textrm{det}'\textbf{A}_{{\cal S}_{(1,j)}})^2\right)}^{{\cal J}_{j+1}^{\textrm{sGal}}}\times \overbrace{\left(\int d\mu_{(j,k)} \textrm{PT}_{\beta_1}^2(\textrm{det}\textbf{A}_{\bar{\beta}_1})^2\right)}^{{\cal J}_{k-j+2}^{\textrm{sGal}\oplus\phi^3}(\beta_1)}\\
			& \times\underbrace{\left(\int d\mu_{(k,1)} \textrm{PT}_{\beta_2}^2(\textrm{det}\textbf{A}_{\bar{\beta}_2})^2\right)}_{{\cal J}_{n-k+3}^{\textrm{sGal}\oplus\phi^3}(\beta_2)}
		\end{split}
	\end{equation}
	where we stress again that the fact that the only particles we identify with $\beta_1$ and $\beta_2$ are contained in the set $\{1,j,k,I,J,K\}$ shows why we will always have 3 biadjoints in the mixed amputated currents. Again, the only particle in this set which is not identified with a biadjoint scalar corresponds to the off-shell particle in the non-mixed current. This implies that the non-mixed current will contain two biadjoints and therefore its expression is equivalent to that of a current with only Galileons.

	\section{Applications: New Recursion Relations for NLSM Amplitudes}\label{sec: recursions}
	
	In this section we show how to use smooth splittings of NLSM amplitudes as data to build BCFW-like recursion relations. It is well-known that standard BCFW relations are not applicable to the NLSM. In order to explain the reason let us review the procedure. Consider some subset of momenta and introduce a one-complex parameter deformation $p_i(z) = p_i+z r_i$ such that $p_i(z)^2=0$ and momentum conservation remains valid for all $z$. This means that the amplitude evaluated on this new kinematics can be considered a function $A^{\rm NLSM}(z)$ such that $A^{\rm NLSM}(0)= A^{\rm NLSM}_n(\mathbb{I})$, i.e. it agrees with the desired amplitude at $z=0$. Now 
	$$ A^{\rm NLSM}_n(\mathbb{I}) = \frac{1}{2\pi i}\oint_{|z|=\epsilon} dz \frac{A^{\rm NLSM}(z)}{z}.$$
	Deforming the contour one gets a formula for $A^{\rm NLSM}_n(\mathbb{I})$ in terms of residues where propagators give simple poles. These residues are determined via unitarity to be products of smaller amplitudes and hence the recursive structure. However, there is also the contribution of a pole at $z=\infty$ which is in general not known.
	
	Thus, the condition for the recursion to work is that $A^{\rm NLSM}(z)$ vanishes as $z\to \infty$. In general this is not the case in the NLSM due to the presence of contact terms. One possible solution is to design deformations such that the kinematics becomes that of a soft-limit for some $z=z^*$. Let us choose $z^*=1$. The NLSM is known to vanish in a soft-limit and therefore one can consider 
	$$ A^{\rm NLSM}_n(\mathbb{I}) = \frac{1}{2\pi i}\oint_{|z|=\epsilon} dz \frac{A^{\rm NLSM}(z)}{z(1-z)}.$$
	Now, if the new deformation does not make the behaviour of $A^{\rm NLSM}(z)$ worse as $z\to \infty$ then $\frac{A^{\rm NLSM}(z)}{z(1-z)}$ has a better behavior while its residue at $z=1$ vanishes. As it turns out, either a combination of several of these improvements are needed \cite{Cheung:2015ota} or knowing the behavior of subleading terms in soft limits is needed so that $\frac{A^{\rm NLSM}(z)}{z(1-z)^2}$ can be used \cite{Cachazo:2016njl}. Either way, new information is needed in order to construct a successful recursion relation.
	
	The strategy we will use is therefore to introduce a complex deformation such that at some values $z=z^*$ split kinematics is achieved so that its behaviour can be used instead of soft limits. 
	
	Given that split kinematics is completely defined in terms of Mandelstam invariants, it is convenient to introduce a version of the BCFW procedure for $s_{ab}$ directly without starting with momentum vectors. In general, given a matrix of Mandelstam invariants, a BCFW deformation is achieved by 
	\be\label{bcfw} 
	s_{ab}(z) = s_{ab} + z r_{ab}.
	\ee
	Imposing that $s_{ab}(z)$ is a valid matrix of Mandelstam invariants for any $z$ simply implies that so must be $r_{ab}$. In a sense, \eqref{bcfw} interpolates between two sets of Mandelstam invariants, the original one at $z=0$ and the new one at $z=\infty$. 
	
	In order to construct the desired deformation let us select a particular 3-split $(i,j,k)$. This is achieved by imposing that a certain subset of kinematic invariants vanish. Let us denote such set ${\cal V}_{(i,j,k)}$. For example, for $n=6$ and $(1,3,5)$ one has ${\cal V}_{(1,3,5)} = \{ s_{24},s_{46},s_{62}\}$. Requiring the deformed kinematics to reach the 3-split kinematics at $z=1$ can be achieved by choosing $r_{ab} = -s_{ab}$ if $s_{ab}\in {\cal V}_{(i,j,k)}$. More explicitly, one finds
	\be\label{splitD}
	s_{ab}(z) = \left\{ \begin{array}{cc}
		(1-z)s_{ab} & {\rm if}\quad s_{ab}\in {\cal V}_{(i,j,k)}  \\
		s_{ab}+z r_{ab}  & {\rm otherwise}. 
	\end{array}  \right.
	\ee
	as discussed above, one must require that momentum conservation is satisfied and this means that 
	$$ \sum_{b=1}^n r_{ab} = 0 \quad {\rm for} \quad a\in \{ 1,2,\ldots , n\} . $$
	Let us consider the NLSM amplitude under the deformation \eqref{splitD}. Using the CHY formulation it is easy to show that the mass dimension of $A^{\rm NLSM}_n(\mathbb{I})$ is $2$, i.e. it is of degree one in Mandelstam invariants. This gives 
	\be
	A^{\rm NLSM}(z) = {\cal O}(z)\quad {\rm as} \quad z\to \infty .
	\ee
	This behavior implies that even the modified function $A^{\rm NLSM}(z)/z(1-z)$ still has a pole at $z=\infty$.
	
	The way to solve this problem is to change the deformation so that in addition to reaching $(i,j,k)$-split kinematics at $z=1$ it reaches a different split kinematics point, say $(r,p,q)$, at a different point, say $z=-1$.  
	
	A straightforward way of doing this is by using 
	\be\label{splitD2}
	s_{ab}(z) = \left\{ \begin{array}{cc}
		(1-z)s_{ab} & \quad {\rm if}\quad s_{ab}\in {\cal V}_{(i,j,k)} \quad {\rm and} \quad s_{ab}\notin {\cal V}_{(r,p,q)}  \\
		(1+z)s_{ab} & \quad {\rm if}\quad s_{ab}\notin {\cal V}_{(i,j,k)} \quad {\rm and} \quad s_{ab}\in {\cal V}_{(r,p,q)} \\
		(1-z)(1+z)s_{ab} & \quad {\rm if}\quad s_{ab}\in {\cal V}_{(i,j,k)} \quad {\rm and} \quad s_{ab}\in {\cal V}_{(r,p,q)} \\
		s_{ab}+z r_{ab}  & {\rm otherwise}. 
	\end{array}  \right.
	\ee
	However, this has the problem of making every Mandelstam invariant $s_{ab} \in {\cal V}_{(i,j,k)}\cap  {\cal V}_{(r,p,q)}$ a polynomial of degree $2$ in $z$. Such polynomials would spoil the counting and the construction. 
	
	Therefore we must require that ${\cal V}_{(i,j,k)}\cap  {\cal V}_{(r,p,q)} = \emptyset$.
	
	A simple choice that achieves the desired deformation is $(1,2,4)$ and  $(1,3,4)$. It is easy to prove that ${\cal V}_{(1,2,4)}\cap  {\cal V}_{(1,3,4)} = \emptyset$. More explicitly,
	\begin{align*}\label{twoSets}
		{\cal V}_{(1,2,4)} & =  \{ s_{3a}\; : \; a=5,6,\ldots ,n \},\\
		{\cal V}_{(1,3,4)} & =  \{ s_{2a}\; : \; a=5,6,\ldots ,n \}.
	\end{align*}
	Now we are ready to present the BCFW-like construction. Consider the complex deformation:
	\be\label{splitD3}
	s_{ab}(z) = \left\{ \begin{array}{cc}
		(1-z)s_{2b} & \quad {\rm if}\quad a=2,\;\; b\in \{ 5,6,\ldots ,n \} \\
		(1+z)s_{3b} & \quad {\rm if}\quad a=3,\;\; b\in \{ 5,6,\ldots ,n \}\\
		s_{ab}+z r_{ab}  & {\rm otherwise}. 
	\end{array}  \right.
	\ee
	Here we are using momentum conservation
	\be
	\sum_{b=1}^n s_{ab}(z) = 0. 
	\ee

	The function $A^{\rm NLSM}(z)$ has poles at finite values of $z$ exactly where planar kinematic invariants involving an odd number of particles vanish. This is because the theory possesses interactions vertices with only an even number of legs. Let us call the set of planar invariants in poles 
	\be
	{\cal P}_n :=\{ s_{i,i+1,\ldots ,i+m-1}\; : i\in [n],\; m\in 2\mathbb{Z}\} .
	\ee
	The choice of $r_{ab}$ in \eqref{splitD3} is arbitrary as long as all invariants in ${\cal P}_n$ become polynomials of degree exactly one under the deformation \eqref{splitD3}.   
	
	The BCFW-like formula for the NLSM is then obtained by deforming the contour of 
	$$ A^{\rm NLSM}_n(\mathbb{I}) = \frac{1}{2\pi i}\oint_{|z|=\epsilon} dz \frac{A^{\rm NLSM}(z)}{z(1-z^2)},$$
	giving rise to\footnote{Clearly the original contour $|z|=\epsilon$ is defined to be counterclockwise. The contour deformation leads to contours around the poles at $z=1$, $z=-1$, etc., which are clockwise and therefore the residues pick up an extra minus sign. Also, for contours $|s(z)|=\epsilon$, note that the pole in the amplitude is of the form $1/s(z) \equiv 1/(s+a z)$ for some $a$. This means that the residue of $1/z(1-z^2)(s+az)$ is $-1/(1-(z^*)^2)s$. The minus sign cancels the one needed to make the contour counterclockwise.}
	\be\label{finalBCFW}
	A^{\rm NLSM}_n(\mathbb{I}) = \frac{1}{2}A^{\rm NLSM}(1)+\frac{1}{2}A^{\rm NLSM}(-1) + \sum_{s(z^*)=0 \, :\, s \in {\cal P}_n} A^{\rm NLSM}_L(z^*)\frac{1}{(1-(z^*)^2)s}A^{\rm NLSM}_R(z^*).
	\ee

	In this formula 
	\be
	A^{\rm NLSM}(1) = \left. A^{\rm NLSM}_n(\mathbb{I})\right|_{{\rm split\; kin.\,}(1,3,4)} = {\cal J}^{{\rm NLSM}\oplus \phi^3}(5,\ldots ,n|1,I(1),4)\times \left(s_{13}+r_{13}\right),
	\ee
	where ${\cal J}^{{\rm NLSM}\oplus \phi^3}(5,\ldots ,n|1,I(1),4)$ stands for the $(n-1)$-point current evaluated on $s_{ab}(1)$. 
	
	Likewise, 
	\be
	A^{\rm NLSM}(-1) = \left. A^{\rm NLSM}_n(\mathbb{I})\right|_{{\rm split\; kin.\,}(1,2,4)} = {\cal J}^{{\rm NLSM}\oplus \phi^3}(5,\ldots ,n|1,I(-1),4)\left(s_{24}+r_{24}\right).  
	\ee
	Finally, $A^{\rm NLSM}_L(z^*)$ and $A^{\rm NLSM}_R(z^*)$ are the amplitudes that result from the standard factorization at the planar poles of the deformed amplitude.

	\subsection{Example: Six-Point NLSM Amplitude}
	
	In order to illustrate the BCFW formula \eqref{finalBCFW} let us apply it to the six-point NLSM amplitude. The complex deformation is given by

	\be\label{splitSix}
	s_{ab}(z) = \left\{ \begin{array}{cc}
		(1-z)s_{2b} & \quad {\rm if}\quad a=2,\;\; b\in \{ 5,6 \} \\
		(1+z)s_{3b} & \quad {\rm if}\quad a=3,\;\; b\in \{ 5,6\}\\
		s_{ab}+z r_{ab}  & {\rm otherwise}. 
	\end{array}  \right.
	\ee
	Momentum conservation only imposes six constrains and we find that the remaining freedom can be used to make the following choice
	\begin{equation}
		\begin{split}
			\left\{ \right. & \left. r_{12}\to 0,\, r_{13}\to 0,\, r_{14}\to 0,\, r_{15}\to
			0,r_{16}\to 0,\right.   \\ & \left.
			r_{24}\to -s_{25}-s_{26}-\Lambda^2,\, r_{34}\to s_{35}+s_{36}-\Lambda^2, \right.  \\ & \left.
			r_{45}\to
			s_{26}-s_{36}+\Lambda^2,\, r_{46}\to
			s_{25}-s_{35}+\Lambda^2, \right.
			\\  & \left. r_{56}\to
			-s_{25}-s_{26}+s_{35}+s_{36}-\Lambda^2,\, r_{23}\to \Lambda^2 \right\}.
		\end{split}
	\end{equation}
	Recall that the original $s_{ab}$ are assumed to satisfy momentum conservation. In order to use the recursion formula \eqref{finalBCFW} it is convenient to introduce the planar invariants $s_{123}(z), s_{234}(z)$ and $s_{345}(z)$. These are deformations of the usual planar invariants, e.g.
	\be
	s_{123}(z)=s_{12}(z)+s_{13}(z)+s_{23}(z) = s_{123}+z\Lambda^2.
	\ee
	Now we list the contribution from each of the poles in ${\cal P}_6$. 
	
	The first contribution is from $s_{123}(z^*)=0$, i.e. $z^* = -s_{123}/\Lambda^2$. This is given by
	\be
	\left( s_{12}(z^*)+s_{23}(z^*)\right) \frac{1}{s_{123}(1-(z^*)^2)}
	\left( s_{45}(z^*)+s_{56}(z^*)\right).
	\ee
	The other two contributions are similar. Instead of presenting their expressions as functions of $\Lambda$ we use the fact that the final answer must be $\Lambda$ independent and then present their limit as $\Lambda\to \infty$. In the order $s_{123}(z^*)=0, s_{234}(z^*)=0$ and $s_{345}(z^*)=0$ the contributions are:
	\be
	\begin{split}
		& \frac{(s_{12}+s_{23}-s_{123})(s_{45}+s_{56})}{s_{123}}+  \frac{(s_{23}+s_{34})(s_{56}+s_{61}-s_{234})}{s_{234}}+\\ & \frac{(s_{34}+s_{45}-s_{345})(s_{61}+s_{12}-s_{345})}{s_{345}}.
	\end{split}
	\ee
	Finally, the contributions from split kinematic points $z=1$ and $z=-1$ are computed using mixed currents in the $\textrm{NLSM}\oplus \phi^3$ theory defined in \cite{Cachazo:2016njl},
	\be
	{\cal J}(4,5|1,I(z),3)=\frac{s_{34}(z)+s_{45}(z)}{s_{345}(z)}+\frac{s_{45}(z)+s_{51}(z)}{s_{451}(z)}-1.
	\ee
	This means that 
	\begin{align}
		A^{\rm NLSM}(1) & = {\cal J}(5,6|1,I(1),3)\times (s_{23}(1)+s_{34}(1)), \\
		A^{\rm NLSM}(-1) & = {\cal J}(5,6|1,I(-1),3)\times (s_{12}(-1)+s_{23}(-1)).
	\end{align}
	We also present these results in the limit $\Lambda \to \infty$,
	\be
	\frac{1}{2}A^{\rm NLSM}(1) = 0, \quad \frac{1}{2}A^{\rm NLSM}(-1) = s_{34}+s_{45}-s_{345}.
	\ee
	
	Adding all five contributions gives the expression
	\be
	\begin{split}
		A^{\rm NLSM}_6(\mathbb{I}) = & \frac{(s_{12}+s_{23}-s_{123})(s_{45}+s_{56})}{s_{123}}+  \frac{(s_{23}+s_{34})(s_{56}+s_{61}-s_{234})}{s_{234}}+\\ & \frac{(s_{34}+s_{45}-s_{345})(s_{61}+s_{12}-s_{345})}{s_{345}}+s_{34}+s_{45}-s_{345},
	\end{split}
	\ee
	which agrees with the well-known result
	\be
	A^{\rm NLSM}_6(\mathbb{I}) =\left( \frac{1}{2}\frac{(s_{12}+s_{23})(s_{45}+s_{56})}{s_{123}} - s_{12}\right) + {\rm perm.}
	\ee
	where the permutations indicate five other terms obtained from the one shown by sending all labels $i\to i+m\, {\rm mod}\, 6$ with $m\in \{1,2,3,4,5\}$.

	\section{Discussion}\label{sec:discussion}
	
	In this work we have uncovered a new behavior of tree-level scattering amplitudes on subspaces of the kinematic space. Smoothly splitting amplitudes on the $(i,j,k)$ split kinematic subspace leads to a product of three amputated currents in which the particle set does not partition. This semi-locality is what makes smooth splits different from standard factorization and as far as we know not derivable from unitarity arguments. In fact, the closest behaviour in the literature to smoothly splitting an amplitude seems to be the soft limit.
	
	Obtaining new information on the behaviour of amplitudes on subspaces of kinematic space is important in order to understand what makes such functions special and relevant to the physical world. The semi-local behaviour we have found involves currents which have to be turned into amplitudes in order to be observables. It is interesting to note that when further conditions on the kinematic space are imposed in order to require the currents to become amplitudes at least one of them vanishes. It would be interesting to further explore this phenomenon and perhaps associated with a mechanism for ensuring locality in observables.    
	
	In this paper we have only scratched the surface of this fascinating topic and therefore there are many directions to be explored. Here we only provide a partial list.

	\subsection{Comparison with the Soft Limit}
	
	As mentioned above, the closest behavior to semi-locality seems to be the soft limit. It is therefore instructive to consider the similarities and differences. In a soft limit the momentum of a particle, say the $n^{\rm th}$ particle, is taken to zero, i.e. $p_n\to \tau \hat p_n$ with $\tau\to 0$. In this limit
	\be
	m_n(\mathbb{I}_n,\mathbb{I}_n)\to \left(\frac{1}{s_{n-1,n}}+\frac{1}{s_{n,1}}\right) m_n(\mathbb{I}_{n-1},\mathbb{I}_{n-1}) + {\cal O}(\tau^0).
	\ee
	The so-called soft factor is reminiscent of a four-particle amplitude. Of course, we have seen in this work, this expectation is not correct since $s_{n-1,n,1}/s_{n-1,n} \neq 0$, i.e. the momentum of the fourth leg is off-shell. The ratio is needed in order to remove the trivial $\tau$ dependence. Nevertheless, this soft factor can be thought of as an amputated current ${\cal J}(n-1,n,1)$ and once again we get a semi-local factorization
	\be
	m_n(\mathbb{I}_n,\mathbb{I}_n)\to {\cal J}(n-1,n,1) m_n(\mathbb{I}_{n-1},\mathbb{I}_{n-1}) + {\cal O}(\tau^0)\,,
	\ee
	in which particles $n-1$ and $1$ participate in both factors. 
	
	While the semi-local feature is similar to that of smooth splits the main difference is that this is achieved in a singular limit and there are subleading corrections. 
	
	In order to compare let us consider the $(1,n-2,n-1)$ split kinematic subspace. This is simply defined as the subspace with $s_{an}=0$ for $a\in \{ 2,3,\ldots ,n-3\}$. Here the biadjoint amplitude smoothly splits as
	\be  
	\left. m_n(\mathbb{I}_n,\mathbb{I}_n)\right|_{\rm split~kin.} = {\cal J}(n-1,n,1) {\cal J}(1,2,\ldots ,n-2).
	\ee  
	Note that in order to reach the soft limit subspace from the $(1,n-2,n-1)$ split kinematic subspace one has to impose the additional constrains $s_{n-2,n}=s_{n-1,n}=s_{n,1} = {\cal O}(\tau)$ with $\tau\to 0$. In this limit the off-shell leg of ${\cal J}(1,2,\ldots ,n-2)$ which has momentum $P_I=p_{n-1}+p_n$ becomes $P_I \to p_{n-1}$ and therefore on-shell, turning the current into the amplitude $m_n(\mathbb{I}_{n-1},\mathbb{I}_{n-1})$. It is also worth noticing that the direction in which the soft limit subspace is approached is important. If we were to take the limit $s_{n-2,n}\to 0$ first, then $s_{n-1,n}+s_{n,1} \to 0$ due to momentum conservation and therefore the current ${\cal J}(n-1,n,1)$ would vanish. 
	
	We interpret this close connection between soft limits and how an amplitude smoothly splits as saying that the $(1,n-2,n-1)$ split kinematic subspace provides a ``pre-soft limit". It would be interesting to explore this connection further. 
	
	\subsection{Smoothly Splitting Currents}\label{sec: smoothly splitting currents}
	
	In the standard factorization, which is obtained by expanding around a point where a kinematic invariant vanishes, an amplitude factors as the product of two lower point amplitudes. This means that it is possible to further factor each of the smaller amplitudes. The chain of factorizations leads to the notion of compatible poles or singularities and to many interesting relations to areas of mathematics such as tropical geometry, matroid theory, etc. 
	
	When an amplitude is smoothly split, the resulting factors are amputated currents. This means that it is not obvious that the procedure can be iterated. However, the CHY formulation of the currents seems to suggest that developing a procedure to smoothly split currents is possible and it would be interesting to explore this direction and try to develop all the connections to mathematics which are known for standard factorizations.
	
	Let us consider what happens when we try to split a current and then pose a concrete question for future work.
	
	By restricting $m_8(\mathbb{I},\mathbb{I})$ to the intersection of two split kinematic subspaces, as in Figure \ref{fig:triangulated-4-split-135-157},
	\begin{figure}[h!]
		\centering
		\includegraphics[width=0.5\linewidth]{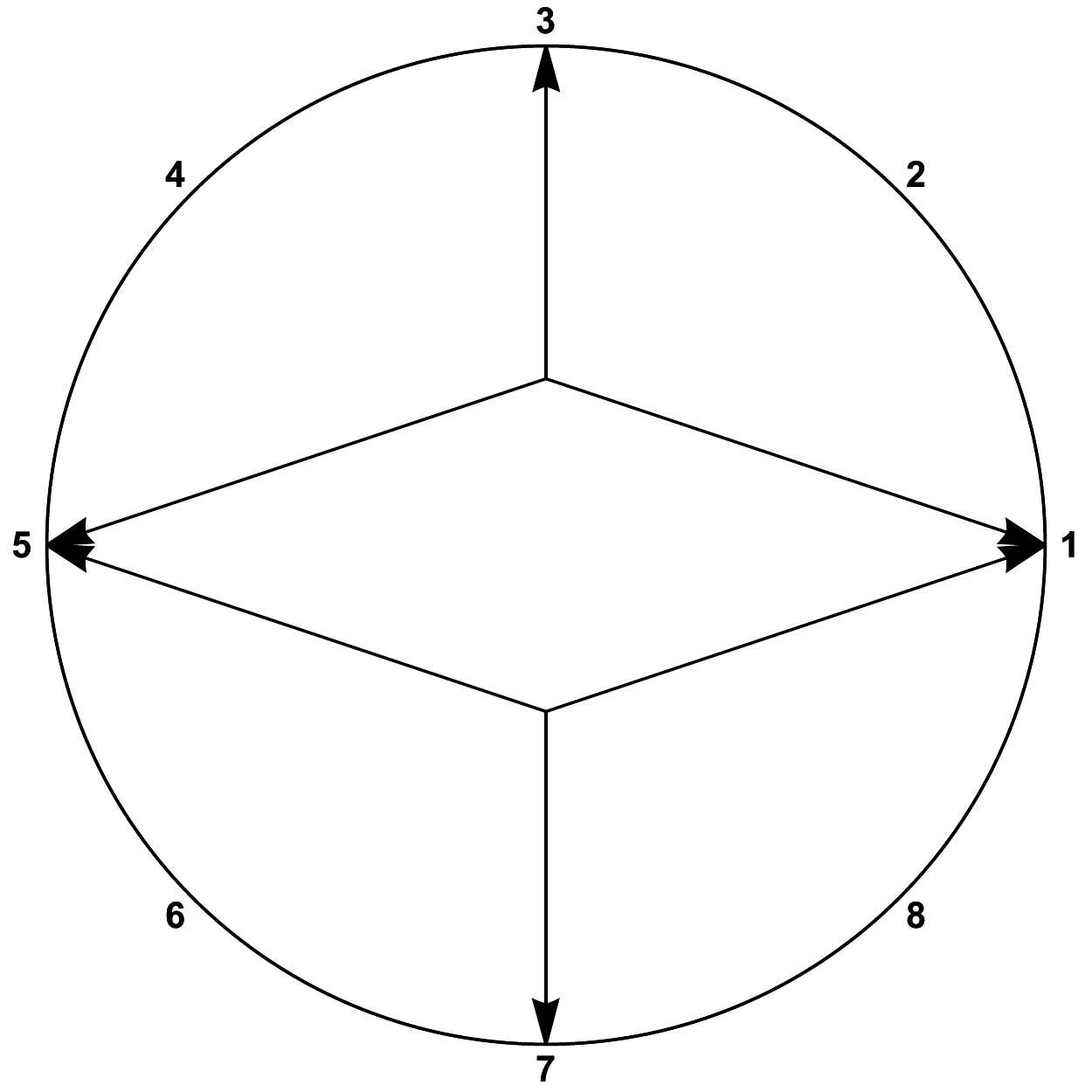}
		\caption{Smoothly splitting a current: intersecting the split kinematics subspaces, $\mathcal{K}_{(1,3,5)}\cap \mathcal{K}_{(1,5,7)}$.}
		\label{fig:triangulated-4-split-135-157}
	\end{figure}
	$\mathcal{K}' = \mathcal{K}_{(1,3,5)} \cap \mathcal{K}_{(1,5,7)}$, we obtain a decomposition of $m_8(\mathbb{I},\mathbb{I})$ into a product of four currents and an additional factor which requires interpretation, 
	$$m_{8} \big\vert_{\mathcal{K}'} = \left(\frac{1}{s_{12}}+\frac{1}{s_{23}}\right)
	\left(\frac{1}{s_{34}}+\frac{1}{s_{45}}\right) \left(\frac{1}{s_{56}}+\frac{1}{s_{67}}\right) \left(\frac{1}{s_{78}}+\frac{1}{s_{18}}\right) \left(\frac{1}{s_{3456}}+\frac{1}{s_{4567}}\right).$$
	Here the fifth factor can be interpreted as correlation function with more than one off-shell leg, but something else stands out more prominently: the first four are linked in a cyclic chain!  We have checked examples at larger $n$, by triangulating polygons, thereby intersecting more split kinematic subspaces to find similar cyclic chain splittings with more than four factors appear for more particles n.  
	
	If it holds in general, what combinatorial structure could govern such decompositions?  Could there be a systematic interpretation of this behavior in terms of subdivisions of the hypersimplex $\Delta_{2,n}$, and/or in terms of associahedra?   One possible approach might be to look towards  (possibly not matroidal) subdivisions of $\Delta_{2,n}$; in any case, the decomposition arising from smoothly splitting currents in this way begs for a combinatorial interpretation.  We present another perspective in the next section.
	
	To be more precise, can one give a complete enumeration and interpretation of all kinematic subspaces $\mathcal{K}'$ such that the restriction $m_n(\mathbb{I},\mathbb{I})\big\vert_{\mathcal{K}'}$ is proportional to the product of a cyclic chain of currents,
	$$m_n(\mathbb{I},\mathbb{I})\big\vert_{\mathcal{K}'} \sim \prod_{j=1}^{d} \mathcal{J}(\alpha_j,\alpha_{j}+1,\ldots,\alpha_{j+1})?$$
	In this case, the additional proportionality factors, which are not shown here, would be a product of correlation functions, each having more than one off-shell leg.
	
	What is the correct setting to explain the cyclic chain decompositions?  Are there cohomological and combinatorial interpretations?  We leave such mathematical investigations and their physical implications to future work.

	\subsection{$\!\!$Additional Comments on Planarity: Kinematic Invariants and Smooth Splits}
	In the usual presentation of Mandelstam invariants $s_{ij}$, they may be organized in a square matrix such as Equation \eqref{eqn: matkinInvariants158}, for which a linear order $1,\ldots, n$ has been chosen, regardless of the symmetry of the scattering process.  In order to represent fully the structure of the kinematic space it is natural to associate them to the vertices of a certain $n-1$ dimensional polytope $\Delta_{2,n}$, with vertices $e_i+e_j$, which is itself permutation invariant.  Moreover, this construction has the convenient feature that the sum of $s_{ij}$ over any facet of $\Delta_{2,n}$ evaluates to zero, due to momentum conservation.  The polytope $\Delta_{2,n}$ is called a hypersimplex and is well-known in combinatorial geometry; see Figure \ref{fig:octahedronmandelstams} for the fundamental example.
	\begin{figure}[h!]
		\centering
		\includegraphics[width=0.5\linewidth]{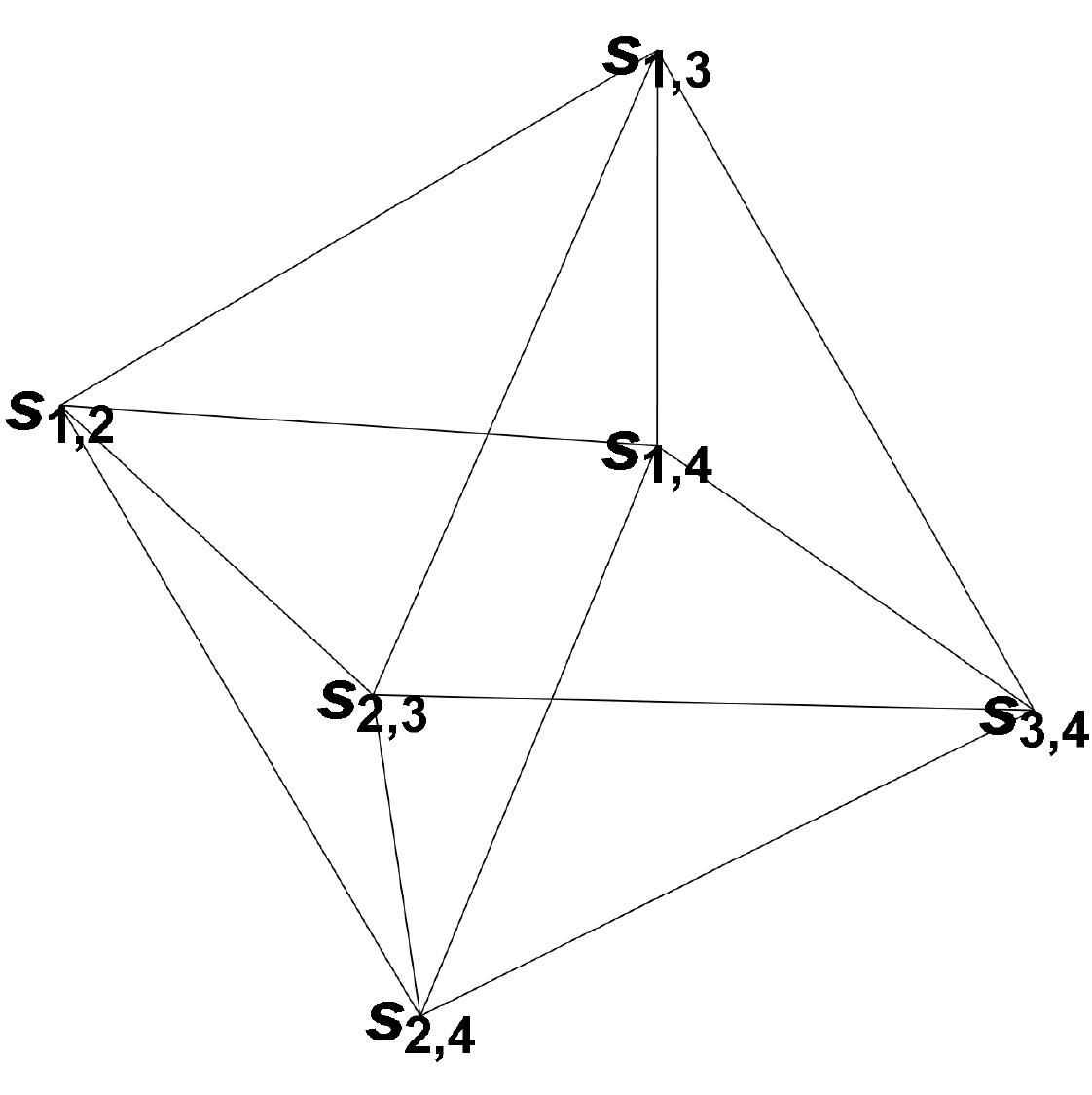}
		\caption{Mandelstam invariants arranged on the hypersimplex $\Delta_{2,4}$, the octahedron.  By momentum conservation, every (triangular) facet sums to zero.}
		\label{fig:octahedronmandelstams}
	\end{figure}
	In this section, we show how this perspective extends to split kinematics; for each split we define a projection of $\Delta_{2,n}$ into $\mathbb{R}^3$; using the projection to identify preimages of lattice points partitions the vertices of $\Delta_{2,n}$ into 16 blocks, as shown in Figure \ref{fig:connectedcomponentshypersimplexsplit158}.
	
	For a nontrivial example of split kinematics, Equation \eqref{eqn: matkinInvariants158} lists the nonzero Mandelstam invariants on the $n=9$ particle split kinematics $(1,5,8)$.
	\begin{eqnarray}\label{eqn: matkinInvariants158}
		(s_{ab}) = \begin{bmatrix}
			0 & s_{12} & s_{13} & s_{14} & s_{15} & s_{16} & s_{17} & s_{18} & s_{19} \\
			s_{12} & 0 & s_{23} & s_{24} & s_{25} & 0 & 0 & s_{28} & 0 \\
			s_{13} & s_{23} & 0 & s_{34} & s_{35} & 0 & 0 & s_{38} & 0 \\
			s_{14} & s_{24} & s_{34} & 0 & s_{45} & 0 & 0 & s_{48} & 0 \\
			s_{15} & s_{25} & s_{35} & s_{45} & 0 & s_{56} & s_{57} & s_{58} & s_{59} \\
			s_{16} & 0 & 0 & 0 & s_{56} & 0 & s_{67} & s_{68} & 0 \\
			s_{17} & 0 & 0 & 0 & s_{57} & s_{67} & 0 & s_{78} & 0 \\
			s_{18} & s_{28} & s_{38} & s_{48} & s_{58} & s_{68} & s_{78} & 0 & s_{89} \\
			s_{19} & 0 & 0 & 0 & s_{59} & 0 & 0 & s_{89} & 0 \\
		\end{bmatrix}
	\end{eqnarray}
	\begin{figure}[h!]
		\centering
		\includegraphics[width=.7\linewidth]{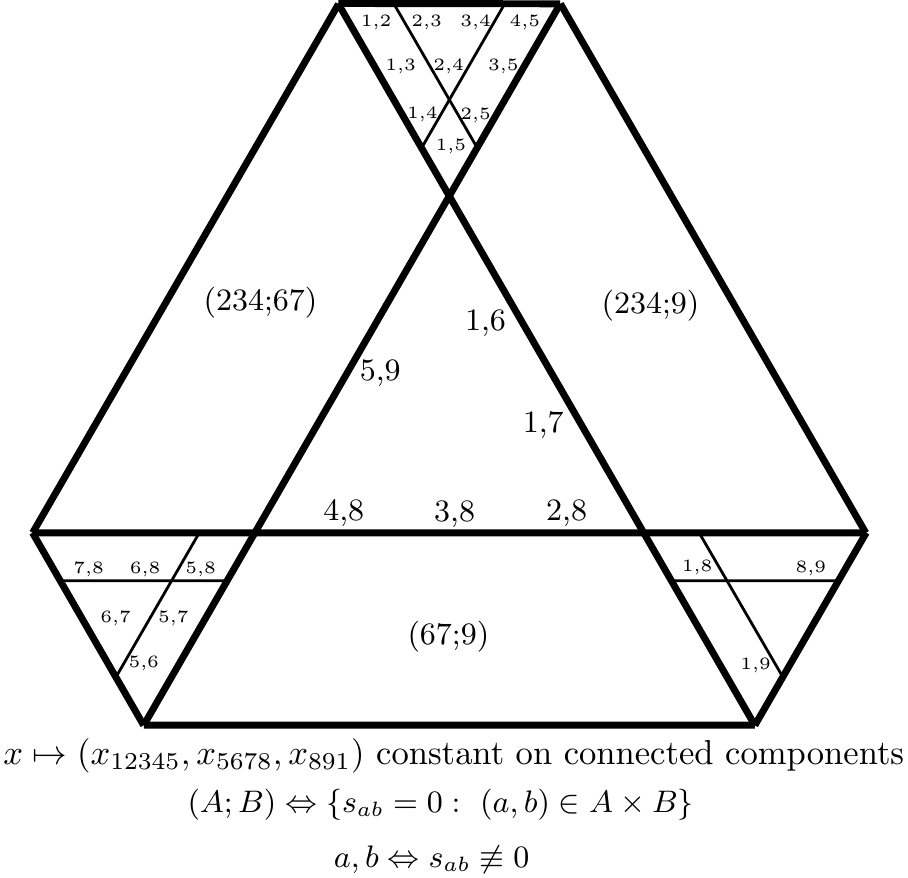}
		\caption{Schematic partition of the vertices $e_a + e_b$ (abbreviated by pairs $a,b$ above) of the hypersimplex $\Delta_{2,9}$ by the split kinematics $(1,5,8)$.  The projection 
			$x \mapsto (x_{12345},x_{5678},x_{891})$ from $\mathbb{R}^9$ to $\mathbb{R}^3$ is by construction constant on connected components.  The Mandelstam invariants which are set to zero are indicated in the three large rhombi.}
		\label{fig:connectedcomponentshypersimplexsplit158}
	\end{figure}
	Figure \ref{fig:connectedcomponentshypersimplexsplit158} organizes the combinatorial data into the 16 connected components of the vertices of the hypersimplex\footnote{Recall that the hypersimplex $\Delta_{2,n}$ is an $n-1$ dimensional polytope, the convex hull of all 0/1 vectors $e_i+e_j \in \mathbb{R}^n$.} $\Delta_{2,9}$, as partitioned by their values under the projection 
	\begin{eqnarray}\label{eq:projection}
		x & \mapsto & (x_1+x_2+x_3+x_4+x_5,x_5+x_6+x_7 +x_8,x_8+x_9 +x_1).
	\end{eqnarray}
	Then it is easy to see that the $n=9$ particle split kinematics $(1,5,8)$ can be extracted from Figure \ref{fig:splitkinematics3dprojection} by specifying which Mandelstams $s_{ab}$ are set to zero: they fit inside the three large rhombi, where for instance $(67;9)$ means that $s_{69}=0$ and $s_{79}=0$.
	The three sets of Mandelstam invariants in the corner triangles play an obvious role for constituent three amputated currents in the 3-split amplitude; however the presence of the six Mandelstams 
	$$s_{16},s_{17},s_{28}, s_{38},s_{48},s_{59}$$
	in the inner triangle is more subtle and is essential in the proof of the splitting.
	\begin{figure}[h!]
		\centering
		\includegraphics[width=0.7\linewidth]{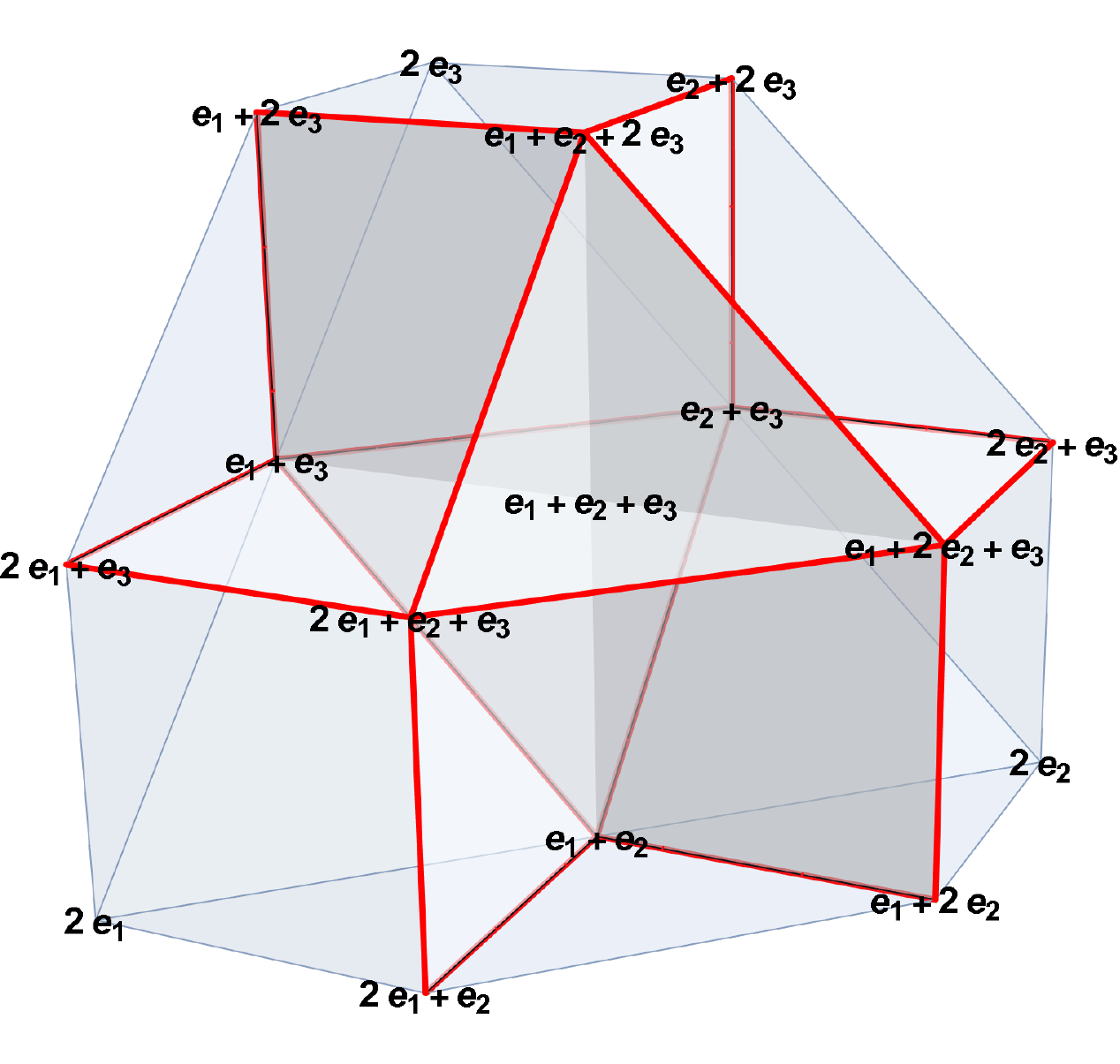}
		\caption{Image of any $\Delta_{2,n}$ under a generic projection $x\mapsto (x_i+\cdots +x_j,x_j+\cdots +x_k,x_k + \cdots +x_i)$ into $\mathbb{R}^3$.  Vertices of $\Delta_{2,n}$ map to the 16 lattice points, which in turn correspond exactly to the 16 connected components of Figure \ref{fig:connectedcomponentshypersimplexsplit158}.  The three lattice points $e_1+e_2,e_2+e_3,e_1+e_3$ correspond (via the projection) to the sets of Mandelstam invariants that are set to zero in split kinematics.  For emphasis, one of the three hexagonal cross-sections is highlighted.}
		\label{fig:splitkinematics3dprojection}
	\end{figure}
	One can easily see that the three hyperplanes 
	$$x_i+\cdots +x_j = 1,\ \ x_j+\cdots +x_k = 1,\ \ x_k+ \cdots+x_1 = 1,$$
	each of which subdivides $\Delta_{2,n}$ into two maximal cells, project onto the three hexagonal cross-sections in Figure \ref{fig:splitkinematics3dprojection}; moreover, as they are in generic position they subdivide $\Delta_{2,n}$ into eight maximal cells\footnote{These eight cells are not matroid polytopes, hence this is \textit{not} a matroid subdivision!  Note, however, that we are setting a large number of Mandelstams to zero, which ``blurs over'' the doubly subdivided octahedra.}, as can be seen in their projection into $\mathbb{R}^3$ via Figure \ref{fig:splitkinematics3dprojection}, into a cubical arrangement.  It would be natural to try to generalize the projection in Figure \ref{fig:splitkinematics3dprojection} to longer cyclic chains; the problem is left to future work.

	\subsection{Generalization to Other Theories}
	
	One of the most pressing questions is to find out if there are other theories with amplitudes that smoothly split. In this work only scalar theories that admit a CHY representation were considered. One of the key ingredients was the behavior of the matrix $\textbf{A}_n$ on the split kinematic subspace. There are other theories with CHY formulations based on the same matrix, such as the Born-Infeld theory. In such theories a new element is also present, it is a matrix that combines momenta and polarization vectors, known as $\Psi(p_a,\epsilon_a)$. It seems reasonable to expect that imposing conditions on the polarization vectors one could smoothly split such amplitudes. Of course, if the Pfaffian of $\Psi$ shows a good behavior then a whole new branch of theories could also smoothly split, such as Yang-Mills.  
	
	The attentive reader might have noticed that neither Born-Infeld nor Yang-Mills amplitudes can split solely in terms of currents within the corresponding theories as a degree (dimension) counting argument reveals. This means that currents outside the theories are needed. It is known that the Born-Infeld (BI) theory admits an extension in which BI photons interact with emergent YM gluons. It would be interesting to further explore this connection.

	\subsection{Relation to Causal Diamonds and the Soft-Limit Triangulation}
	
	A surprising connection between solutions to the wave equations and the space of planar Mandelstam invariants was uncovered in \cite{Arkani-Hamed:2019vag}. Properties of scattering amplitudes, such as factorization, can be translated into properties of the causal structure of an emergent space-time. 
	
	It is natural to consider what conditions on the causal structure are imposed on the $(i,j,k)$-split kinematic subpsace. Somehow the conditions that planar invariants which involve a chain of labels, which in the notation of \cite{Arkani-Hamed:2019vag} correspond to $X_{a,b} = s_{a,a+1,\ldots ,b,b+1}$ or $X_{a,b}=\eta_{a+1,b+1}$, can split into, e.g., $X_{a,b} = X_{a,i-1}+X_{i,b}$, must have a meaning in terms of how different regions interact with each other. It would be interesting to find a geometric interpretation of the semi-local property in this context. 
	
	In order to give more evidence that there are interesting connections, note that a recursion for biadjoint scattering amplitudes was presented in \cite{He:2018svj,Salvatori:2019phs,Arkani-Hamed:2019vag} using a novel soft-limit triangulation. For the reader's convenience we rewrite Equation 16 of \cite{Arkani-Hamed:2019vag} below,     
	\be\label{sofTriangle} 
	m_n = \sum_{i=4}^n \left(\frac{1}{X_{1,3}}+\frac{1}{X_{2,i}}\right){\hat m}_{n_L}\times {\hat m}_{n_R}. 
	\ee 
	In this equation the hatted amplitudes are the smaller amplitudes into which $m_n$ factors near the $X_{2,i}=0$ region with variables shifted so that $X_{2,j}\to X_{2,j}-X_{2,i}$. 
	
	Let us consider the $n=5$ and $n=6$ cases in order to show how degenerate 3-splits can naturally appear from \eqref{sofTriangle} by setting to zero all but one of the terms. The explicit form of \eqref{sofTriangle} for $n=5$ reads (see also \cite[Eq. 17]{Arkani-Hamed:2019vag}), 
	\be\label{wahu} 
	m_5(\mathbb{I},\mathbb{I}) = \left( \frac{1}{s_{12}}+\frac{1}{s_{23}} \right)\left( \frac{1}{s_{51}-s_{23}}+\frac{1}{s_{45}}\right) +  \left( \frac{1}{s_{12}}+\frac{1}{s_{51}} \right)\left( \frac{1}{s_{34}}+\frac{1}{s_{23}-s_{51}}\right) .
	\ee 
	Requiring the first term to vanish by setting the second factor to zero implies that we are exploring the subspace of kinematics space where $s_{23} = s_{45}+s_{51}$. Using that for $n=5$ $s_{23}=s_{451}$ we get the condition of a $(2,3,5)$-split which is a degenerate 3-split, i.e.
	$$s_{451} = s_{45}+s_{51}$$
	or $s_{41}=0$.
	
	Evaluating the second term in \eqref{wahu} on this subspace gives
	\be 
	\left. m_5(\mathbb{I},\mathbb{I})\right|_{\rm split\, kin.} = \left( \frac{1}{s_{12}}+\frac{1}{s_{51}} \right)\left( \frac{1}{s_{34}}+\frac{1}{s_{45}}\right) ={\cal J}(5,1,2){\cal J}(3,4,5).
	\ee 
	Of course, this is a degenerate 3-split because the third amputated current is trivial, i.e. ${\cal J}(2,3) =1$.
	
	Let us now consider the $n=6$ case. The formula \eqref{sofTriangle} becomes
	\be\label{pipo} 
	\begin{split}
		m_6(\mathbb{I},\mathbb{I}) = & \left( \frac{1}{s_{12}}+\frac{1}{s_{23}} \right){\hat m}(4,5,6,1,I) + \left( \frac{1}{s_{12}}+\frac{1}{s_{234}} \right){\hat m}(2,3,4,I){\hat m}(5,6,1,I)+\\ & \left( \frac{1}{s_{12}}+\frac{1}{s_{61}} \right){\hat m}(2,3,4,5,I).
	\end{split}
	\ee
	As explained in the definition of \eqref{sofTriangle} each hatted amplitude must be appropriately shifted and the meaning of the emergent particle $I$ is different in each term.
	
	Let us select kinematic invariants that set to zero the second and third terms in \eqref{pipo}. This is achieved by 
	\be 
	s_{234}= s_{23} + s_{34}, \quad  
	s_{2345} = s_{23} + s_{345}
	\ee
	which is clearly the $(6,1,3)$-split kinematic subspace, i.e., $s_{24}=s_{25}=0$. As expected, the first term in \eqref{pipo} gives the expected answer, i.e.
	\be\label{damo} 
	\left. m_6(\mathbb{I},\mathbb{I})\right|_{\rm split \, kin.} = \left( \frac{1}{s_{12}}+\frac{1}{s_{23}} \right){\cal J}(3,4,5,6).
	\ee

	A similar analysis shows that setting to zero the first and third terms in \eqref{pipo} by only imposing linear constrains leads to subspace in which the second term vanishes as well and therefore we do not get any interesting split.
	
	We have also considered each term in \eqref{pipo} evaluated on the $(1,3,5)$-split and $(2,4,6)$-split kinematic subspaces and found that the second term always vanishes while the other two are non-trivial functions which have to be added in order to exhibit the 3-split behavior.

	\subsection{CEGM Amplitudes: Connections and Prospects}\label{sec: CEGM}
	
	Let us now point out an intriguing similarity between the smooth splitting in Equation \eqref{example} and a particular residue of the generalized biadjoint scalar partial amplitudes $m^{(k)}_n(\mathbb{I},\mathbb{I})$, introduced by Cachazo, Early, Guevara and Mizera (CEGM) in \cite{Cachazo:2019ngv}. The CEGM construction starts with a generalization of the CHY formula for the biadjoint theory which is an integral over the space of $n$ marked points on $\mathbb{CP}^1$, also known as $X(2,n)$, to an integral over the space of $n$ marked points in $\mathbb{CP}^{k-1}$, i.e. $X(k,n)$. 
	
	The original motivation for the CEGM generalization came from the study of extensions of the combinatorial factorization procedure introduced in \cite{Cachazo:2017vkf} from sets of triples to sets of $(k+1)$-tuples, extensions of the delta algebra for MHV leading singularities to higher $k$ \cite{Cachazo:2018wvl}, and an effort to extend the cohomology ring of the moduli space $\text{Conf}_n(SU(2))\slash SU(2)$ \cite{early2019configuration} of $n$ points in $SU(2)$, to other moduli spaces, and in particular a certain combinatorial analog of the scattering equations which appears in the characterization of permutohedral blades \cite{Early:2018mac,2016arXiv161106640E}.

	The CEGM generalization of the CHY potential function is
	\be 
	{\cal S}^{(k)}_n := \sum_{j_1 < j_2 <\ldots < j_k} \mathfrak{s}_{j_1,j_2,\ldots  ,j_k} {\rm log}(|j_1,j_2,\ldots ,j_k|)
	\ee 
	where $|a_1,a_2,\ldots a_k|$ denote Plucker coordinates of $X(k,n)$. There are several important novelties in the theory, which we recall, for the reader's convenience.  First, the kinematic invariants for the theory are higher rank $k$ analogs of Mandelstam invariants $s_{i,j}$; they are indexed by $k$-element subsets, and we use the notation $\mathfrak{s}_{J} = \mathfrak{s}_{j_1,\ldots, j_k}$. Here, the generalization of masslessness is imposed by requiring $\mathfrak{s}_J$ be zero whenever an index is repeated. One also has the $n$ linear relations which generalize momentum conservation,
	$$\sum_{J \ni a} \mathfrak{s}_{J} = 0$$
	for each $a=1,\ldots, n$.
	
	In \cite{Cachazo:2019ngv}, the generalized biadjoint scalar $m^{(k)}_n(\mathbb{I},\mathbb{I})$ was constructed as follows
	\be\label{higherkBA} 
	m^{(k)}_n(\mathbb{I},\mathbb{I}) := \int \prod_{\alpha=1}^{(k-1)(n-k-1)}dx_\alpha\, \delta \left(\frac{\partial {\cal S}^{(k)}_n}{\partial x_\alpha}\right)\times ({\rm PT}^{(k)}(\mathbb{I}))^2
	\ee 
	where ${\rm PT}^{(k)}(\mathbb{I})$ is the $X(k,n)$ analog of the Parke-Taylor function  ${\rm PT}(\mathbb{I})$ presented in Eq. \eqref{PTdef} and $x_\alpha$ is some parameterization of $X(k,n)$. In the same way that the $k=2$ formula controls the leading order in an expansion around $\alpha'=0$ of string theory integrals, \eqref{higherkBA} has been shown to control the leading other in generalized string integrals \cite{Arkani-Hamed:2019mrd}.
	
	In order to present the connection with the smooth splitting of biadjoint amplitudes let us specialize to the case $k=3$ and $n=6$.
	
	Following \cite[Section 2.2]{Cachazo:2019ngv}, one finds that the kinematic invariant $\tilde{R}$, defined by 
	\begin{eqnarray}\label{eq:Rtilde}
		\tilde{R} = \mathfrak{s}_{156}+ \mathfrak{s}_{256}+ \mathfrak{s}_{345}+ \mathfrak{s}_{346}+ \mathfrak{s}_{356}+ \mathfrak{s}_{456},
	\end{eqnarray}
	is a pole of $m^{(3)}_6(\mathbb{I},\mathbb{I})$; it is the residue at $\tilde{R}=0$ that is now of interest.
	
	Now, in terms of the planar basis of kinematic invariants, introduced and developed by the second author in \cite{Early:2020hap,Early:2019eun,Early:2019zyi} in the context of permutohedral and hypersimplicial blades, Equation \eqref{eq:Rtilde} can be rewritten as $\tilde{R} = \eta_{246}(\mathfrak{s})$, where 
	\begin{eqnarray}\label{eq: planar basis expansion}
		\eta_{246} & = & \frac{1}{6}\left(6 \mathfrak{s}_{123}+5 \mathfrak{s}_{124}+4 \mathfrak{s}_{125}+3 \mathfrak{s}_{126}+4 \mathfrak{s}_{134}+3 \mathfrak{s}_{135}+2 \mathfrak{s}_{136}+2 \mathfrak{s}_{145}+\mathfrak{s}_{146} +6 \mathfrak{s}_{156}\right.\nonumber\\
		& + & \left. 3 \mathfrak{s}_{234}+2 \mathfrak{s}_{235}+\mathfrak{s}_{236}+\mathfrak{s}_{245}+5 \mathfrak{s}_{256}+6 \mathfrak{s}_{345}+5 \mathfrak{s}_{346}+4 \mathfrak{s}_{356}+3 \mathfrak{s}_{456} \right).
	\end{eqnarray}
	Here the coefficients have a precise meaning in combinatorial geometry in terms of certain regular matroid subdivisions of polytopes called \textit{hypersimplices}
	\begin{eqnarray}\label{eq:hypersimplex}
		\Delta_{k,n} = \left\{x\in \lbrack 0,1\rbrack^n: \sum_{j=1}^n x_j=k \right\}.
	\end{eqnarray}
	Specifically, each set of coefficients comes from the heights of the piecewise linear surface over $\Delta_{k,n}$, which projects down to induce the subdivision.  There are analogs of the formula in Equation \eqref{eq: planar basis expansion} for $m^{(k)}(\mathbb{I},\mathbb{I})$ for all $(k,n)$, with $2\le k\le n-2$ which are known \cite{Early:2020hap} to give rise to planar bases of kinematic invariants, introduced in  \cite{Early:2019eun} using a particular kind of tropical hypersurface called a blade, which enjoys a certain cyclic symmetry in $(1,2,\ldots, n)$.  In fact, when any given planar kinematic invariant $\eta_{j_1,\ldots, j_k}(\mathfrak{s})$ vanishes, one has a pole of $m^{(k)}_n$; this can be seen from a combinatorial perspective because the constant, rational coefficients of $\eta_{j_1,\ldots, j_k}$ induce a matroid subdivision of $\Delta_{k,n}$ that is \textit{coarsest}, that is, it is not the common refinement of any other collection of subdivisions.
	
	For the biadjoint scalar, which corresponds here to the case $k=2$, one recovers the planar kinematic invariants, as
	$$\eta_{ij} = s_{i+1\cdots j},$$
	and in fact one can interpret the usual identity $s_{J} = s_{J^c}$ using combinatorial principles, as the statement that two different surfaces over the hypersimplex $\Delta_{2,n}$ project down and induce the same matroid subdivision of it into the same pair of matroid polytopes.  Further, the massless condition that $p_i^2 = 0$ has the interpretation that the set of constant, rational coefficients of $\eta_{i,i+1}$ defines a height function over the vertices of $\Delta_{2,n}$ which does not bend over its interior.
	
	For example, the two blade arrangements in Figure \ref{fig:bladestetrahedronsquaremove} are dual to the Mandelstam invariants $s_{12}$ and $s_{23}$, respectively, by using the formula in Equation \eqref{eq:planar basis element} on the six vertices of the octahedron in the center of the twice dilated tetrahedra.
	\begin{figure}[h!]
		\centering
		\includegraphics[width=1\linewidth]{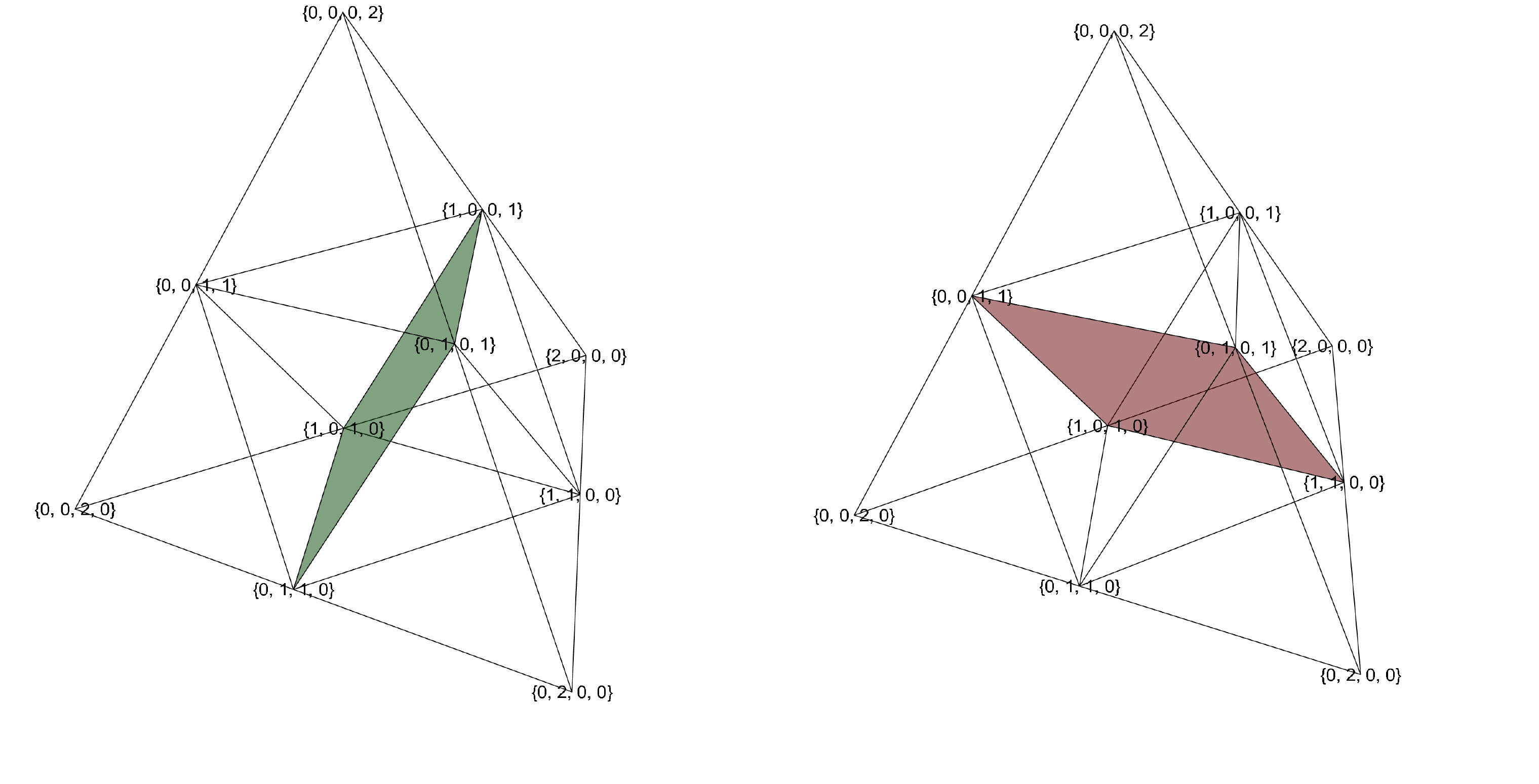}
		\caption{Combinatorial construction of the $s$ and $t$ channels using matroid subdivisions: lift each hyperplane to a surface in $\mathbb{R}^4$ which bends on exactly that hyperplane.  Left: $\eta_{24} = s_{12}$.  Right: $\eta_{13} = s_{23}$.}
		\label{fig:bladestetrahedronsquaremove}
	\end{figure}
	Such analysis leads to the correct generalization of the planar basis $s_{i\cdots j} = \sum_{i\le a<b\le j}s_{i,j}$, to CEGM generalized biadjoint amplitudes, where planar elements are labeled by cyclically non-contiguous (or, ``nonfrozen'') $k$-element subsets, of the form $\eta_{j_1,\ldots, j_k}(\mathfrak{s})$, where $\{j_1,\ldots, j_k\}$ is not a cyclic interval.  These are induced by certain coarsest matroid subdivisions of the hypersimplex $\Delta_{k,n}$.  
	
	The general formula for planar planar basis elements $\eta_{k_1,\ldots, j_k}(\mathfrak{s})$, which give rise to a particular subset of poles of $m^{(k)}_n$, in bijection with \textit{nonfrozen} $k$-element subsets  $\{j_1,\ldots, j_k\}$, is
	\begin{eqnarray}\label{eq:planar basis element}
		& & \eta_{j_1\cdots j_k}(\mathfrak{s}) \\
		& & = -\frac{1}{n}\sum_{I\in \binom{\lbrack n\rbrack}{k}}\min\{(e_{t+1}+2e_{t+2}+\cdots +(n-1)e_{t-1})\cdot (e_I-e_J): t=1,\ldots, n\}\mathfrak{s}_I,\nonumber
	\end{eqnarray}
	where we use the notation $e_J = \sum_{j\in J} e_j$.  That the $\eta_{j_1,\ldots, j_k}(\mathfrak{s})$ are linearly independent is nontrivial in general; this property was proved in \cite{Early:2020hap}.
	
	As a special case, from the definition one has, for $n=5$,
	$$\eta_{25} = \frac{1}{5} \left(4 s_{12}+3 s_{13}+2 s_{14}+s_{15}+2 s_{23}+s_{24}+5 s_{34}+4 s_{35}+3 s_{45}\right),$$
	which is equivalent modulo momentum conservation to the familiar expression
	$$\eta_{25} = s_{34} + s_{45} + s_{35}.$$
	
	Now we are ready to return to the phenomenon of splittings which has been observed to occur on residues for the generalized biadjoint scalar.  By \cite[Cor. 8]{Early:2019eun} the set of planar kinematic invariants $\eta_{abc}$ is a basis of linear functions on the kinematic space, so we can rewrite $m^{(3)}_6(\mathbb{I},\mathbb{I})$ as 
	\begin{eqnarray}\label{eq:36 amplitude Example GFDs}
		m^{(3)}_6(\mathbb{I},\mathbb{I}) & = & \frac{1}{\eta _{236} \eta _{246} \eta _{256} \eta _{346}} + \frac{\eta_{124} + \eta_{346} + \eta_{124}}{\eta_{124}\eta_{346}\eta_{246}(-\eta_{246} + \eta_{124} + \eta_{346} + \eta_{256})} \\
		& +&  \frac{1}{\eta_{256}\eta_{346}\eta_{356}(-\eta_{246} + \eta_{124} + \eta_{346} + \eta_{256})}+ \cdots,\nonumber
	\end{eqnarray}
	and one can check directly (see for instance \cite{Cachazo:2019ngv,Borges:2019csl,Early:2019eun}), that the residue of $m^{(3)}_6(\mathbb{I},\mathbb{I})$ at $\eta_{246}= 0$ is a product of three factors,
	\begin{eqnarray}\label{eq:3n3split}
		\text{Res}_{\eta_{246} = 0}(m^{(3)}_6(\mathbb{I},\mathbb{I})) & = &  \left(\frac{1}{\eta _{236}}+\frac{1}{\eta _{124}}\right) \left(\frac{1}{\eta _{256}}+\frac{1}{\eta _{146}}\right) \left(\frac{1}{\eta _{346}}+\frac{1}{\eta _{245}}\right).
	\end{eqnarray}
	
	Looking forward, we focus on an important outcome of this paper: we have established, using the CHY formalism, that the kind of novel behavior for residues of generalized CEGM amplitudes that has been observed in \cite{Cachazo:2019ngv}, with more progress in \cite{Arkani-Hamed:2019mrd,He:2020ray}, has an analog in three different Quantum Field Theories, as a semi-local ``shadow.'' 
	
	This shadow appears not only for the cubic scalar partial amplitude, but also for NLSM and, more surprisingly, the special Galileon amplitudes where a planar order is not present.  One of the most significant -- and intriguing -- contrasts is that the semi-local smooth 3-splits into amputated currents that we have explored in this paper do not occur at residues of the amplitude but on certain subspaces of the kinematic space where the amplitude does not have a singularity; but for $m^{(3)}_n(\mathbb{I},\mathbb{I})$ it has been observed directly to occur on residues where one (or more) compatible planar basis elements $\eta_{j_1j_2j_3}$ vanishes \cite{Early:2019eun}.  The 3-splitting behavior is not very well-understood, and in fact it remains a very pressing open question whether it continues to occur in any generality.  What lessons need to be learned here?
	
	Another interesting direction would be to study split kinematics in the context of likelihood geometry and in particular likelihood degenerations \cite{Sturmfels:2020mpv,Agostini:2021rze}.  It is natural to propose generalizations of split kinematics for higher rank $k\ge 3$; could one describe what happens to the solutions to the CEGM scattering equations as one approaches the split kinematics subspace, not only for $k=2$, but for $k=3$ and beyond?
	
	Finally, in \cite{Early:2021tce}, the second author proposed a \textit{linearly ordered} analog of the generalized biadjoint scalar $m^{(k)}_n(\mathbb{I},\mathbb{I})$, by in effect introducing a facet deformation of the PK polytope, introduced in \cite{Cachazo:2020wgu}, to a simple polytope, the \textit{PK associahedron}.  The conjecture formulated in \cite{Early:2021tce} amounts to the statement that the poset of compatible iterated residues should be anti-isomorphic to the noncrossing complex of k-element subsets of $\{1,\ldots, n\}$, as studied in \cite{santos2017noncrossing}.  From a combinatorial geometric perspective, two novelties developed in \cite{Early:2021tce} are an explicit realization of the PK associahedron as a Minkowski sum of Newton polytopes; the face poset here was conjectured to be anti-isomorphic to the noncrossing complex, which would establish a connection to \cite{santos2017noncrossing}, and an interpretation of its face poset in terms of the iterated residues of a rational function, that is, the generalized amplitude.  Moreover, also in \cite{Early:2021tce} a new binary geometry, as in \cite{Arkani-Hamed:2019plo}, with compatibility degree contained in a certain noncrossing complex $\mathbf{NC}_{k,n}$, was proposed for all $k\ge 3$.  Now one of the intriguing features of $\mathbf{NC}_{k,n}$ is that the crossing criterion necessarily has a \textit{linear} order $(1,2,\ldots, n)$ rather than a cyclic one, which is one of the main differences from $m^{(k)}(\mathbb{I},\mathbb{I})$.  Is there a physical interpretation of this restricted symmetry?  We do observe that such phenomena do appear in the context of the amputated currents which we consider here; however any possible connection is left to future investigation.
	
	Does the generalized amplitude in  \cite{Early:2021tce}, or the CEGM generalized biadjoint scalar, exhibit meaningful extensions of the smooth 3-split, either on some residue or smoothly?
	
	In particular, based on the results of this paper it is natural to expect that the 3-split residues observed for $m^{(3)}_6(\mathbb{I},\mathbb{I}),\ m^{(3)}_7(\mathbb{I}, \mathbb{I})$ will generalize; the development for the rank $k=3$ CEGM amplitude $m^{(3)}_n(\mathbb{I},\mathbb{I})$ and beyond is left to future work.  
	
	In the next section we sketch a promising direction for future research.
	
	\subsection{CEGM Amplitudes: Smooth Splits at $k=3$ }\label{sec: splitCEGM}
	
	Here we show how generalized $k=3$ amplitudes smoothly split when restricted to a kinematic subspace analogous to the one previously studied for Quantum Field Theory amplitudes.
	
	In order to study smooth splits in generalized amplitudes we will use the CEGM formulation \cite{Cachazo:2019ngv} introduced in Section \ref{sec: CEGM}. Without loss of generality, we consider the $k=3$ split kinematics subspace $(1,2,j,j+1)$ defined by setting to zero any $\mathfrak{s}_{abc}$ whose indices do not satisfy $1\leq a,b,c \leq j+1$ or $j\leq a,b,c \leq 2$, where the indices are understood modulo $n$.
	
	Due to the existing $SL(3,\mathbb{C})$ redundancy we can fix four particles, and a natural choice is the gauge fixing
	
	\begin{equation}
		\begin{blockarray}{cccc}
			\hspace{1mm}\textcolor{Maroon}{1} &  \textcolor{Maroon}{2} &  \textcolor{Maroon}{j} &  \textcolor{Maroon}{j+1} \\
			\begin{block}{[cccc]}
				\hspace{1mm}0 & 0 & 1 & 1 \\
				\hspace{1mm}0 & 1 & 0 & 1 \\
				\hspace{1mm}1 & 0 & 0 & 1 \\
			\end{block}
		\end{blockarray}
		\label{matrixk3}
	\end{equation}
	where punctures $2$ and $j$ are sent to infinity.
	
	Let us however start by writing the $k=3$ CEGM formula for punctures $1$, $2$, $j$ and $j+1$ fixed to generic values
	
	\begin{equation}\label{CEGMk3}
		m_n^{(3)}(\mathbb{I},\mathbb{I})=\int \prod_{a=1}^n\! ' \prod_{t=1}^2 dx_{t,a} \delta\left(\frac{\partial {\cal S}^{(3)}}{\partial x_{t,a}}\right) (V_{1,2,j,j+1}\,\textrm{PT}^{(3)}_n(\mathbb{I}))^2
	\end{equation}
	% \begin{equation}\label{CEGMk3}
		%     m_n^{(3)}(\mathbb{I},\mathbb{I})=\int \prod_{a=1}^n\! ' \prod_{\beta=1}^2 dx^{(\beta)}_a \delta\left(\frac{\partial {\cal S}^{(3)}}{\partial x^{(\beta)}_a}\right) (V_{1,2,j,j+1}\,\textrm{PT}^{(3)}_n(\mathbb{I}))^2
		% \end{equation}
	where $V_{1,2,j,j+1}\equiv |1,2,j||2,j,j+1||j,j+1,1||j+1,1,2|$ and the prime in the product means $a\not\in\{1,2,j,j+1\}$. The $k=3$ Parke-Taylor function is given by
	
	$$\textrm{PT}^{(3)}_n(\mathbb{I})=\frac{1}{|123||234|\cdots |n12|}\,.$$
	Using the gauge fixing \eqref{matrixk3} the factor $V_{1,2,j,j+1}\,\textrm{PT}^{(3)}_n(\mathbb{I})$ can be written as the product
	
	\begin{equation}
		\begin{split}
			&  \overbrace{\left(\frac{|12j||2,j,j+1||j,j+1,1||j+1,1,2|}{|123||234|\cdots |j-2,j-1,j||j-1,j,j+1||j,j+1,1||j+1,1,2|}\right)}^{V_{1,2,j,j+1}\,\textrm{PT}^{(3)}_{(1,2,...,j,j+1)}}\\ & \times\underbrace{\left(\frac{|j,j+1,1||j+1,1,2||12j||2,j,j+1|}{|j,j+1,j+2||j+1,j+2,j+3|\cdots |n-1,n,1||n12||12j||2,j,j+1|}\right)}_{V_{j,j+1,1,2}\,\textrm{PT}^{(3)}_{(j,j+1,...,n,1,2)}}\,,
		\end{split}
	\end{equation}
	where $V_{1,2,j,j+1}=V_{j,j+1,1,2}$. The first factor corresponds to the Parke-Taylor function for a generalized amplitude with the double ordering $(1,2,...,j+1)$ multiplied by the Fadeev-Popov factor $V_{1,2,j,j+1}$ that appears from the fixing of punctures $1$, $2$, $j$ and $j+1$. Similarly, the second factor corresponds to the Parke-Taylor function for the double ordering $(j,j+1,...,n,1,2)$ multiplied by the same Fadeev-Popov factor. Notice that the variables in each factor and after the gauge fixing \eqref{matrixk3} have completely decoupled.
	
	Now let us have a look at the $k=3$ CEGM potential 
	
	$${\cal S}_n^{(3)}:=\sum_{1\leq a<b<c\leq n}\mathfrak{s}_{abc}\,\textrm{log}|abc|$$
	in the split kinematics subspace $(1,2,j,j+1)$. Note that in this kinematics the potential splits into
	
	$${\cal S}_n^{(3)}={\cal S}^{(3)}_{j+1}+{\cal W}$$
	where the first term 
	
	$${\cal S}^{(3)}_{j+1}:=\sum_{1\leq a<b<c\leq j+1}\mathfrak{s}_{abc}\,\textrm{log}|abc|$$
	is the $k=3$ CEGM potential for a generalized amplitude with particles $(1,2,...,j+1)$ and the second term ${\cal W}$ is an object that we still have to identify. Let us look at it in more detail. This term can be written as
	
	\begin{equation}
		\begin{split}
			{\cal W}:=& \sum_{j\leq a<b<c\leq 2}\mathfrak{s}_{abc}\,\textrm{log}|abc|-\mathfrak{s}_{12j}\textrm{log}|12j|-\mathfrak{s}_{1,2,j+1}\textrm{log}|1,2,j+1|-\mathfrak{s}_{1,j,j+1}\textrm{log}|1,j,j+1|\\
			& -\mathfrak{s}_{2,j,j+1}\textrm{log}|2,j,j+1|
		\end{split}
	\end{equation}
	where the indices in the sum are understood modulo $n$. After using the gauge fixing \eqref{matrixk3} we have
	
	$$\textrm{log}|12j|=\textrm{log}|1,2,j+1|=\textrm{log}|1,j,j+1|=\textrm{log}|2,j,j+1|=0$$
	and the variables in the two terms ${\cal S}_{j+1}^{(3)}$ and ${\cal W}$ completely decouple. Moreover, we can now identify ${\cal W}|_{\eqref{matrixk3}}$ with the $k=3$ CEGM potential for a generalized amplitude with particles $(j,j+1,...,n,1,2)$, i.e.
	
	$${\cal W}|_{\eqref{matrixk3}}\equiv{\cal S}^{(3)}_{(j...n12)}|_{\eqref{matrixk3}}\,.$$
	Putting all the pieces together one can see that with the gauge fixing \eqref{matrixk3} the CEGM integral \eqref{CEGMk3} under the split kinematics $(1,2,j,j+1)$ splits into 
	\begin{equation}
		\begin{split}
			m_n^{(3)}(\mathbb{I},\mathbb{I})|_{(1,2,j,j+1)}& = \left(\int \prod_{a=3}^{j-1} \prod_{t=1}^2 dx_{t,a} \delta\left(\frac{\partial {\cal S}_{j+1}^{(3)}|_{\eqref{matrixk3}}}{\partial x_{t,a}}\right) (V_{1,2,j,j+1}\,\textrm{PT}^{(3)}_{(1,2,...,j,j+1)})^2|_{\eqref{matrixk3}}\right)\\
			& \times\left(\int \prod_{a=j+2}^{n} \prod_{t=1}^2 dx_{t,a} \delta\left(\frac{\partial {\cal S}_{(j...n12)}^{(3)}|_{\eqref{matrixk3}}}{\partial x_{t,a}}\right) (V_{j,j+1,1,2}\,\textrm{PT}^{(3)}_{(j,...,n,1,2)})^2|_{\eqref{matrixk3}}\right)
		\end{split}
	\end{equation}
	where from \eqref{CEGMk3} one can see that the first factor is identified with an object that resembles the generalized amplitude $m^{(3)}_{j+1}(\alpha_1,\alpha_1)$ with $\alpha_1=(1,2,...,j+1)$, while the second factor is identified with an object that resembles the generalized amplitude $m^{(3)}_{n-j+3}(\alpha_2,\alpha_2)$ with $\alpha_2=(j,...,n,1,2)$. However, these two factors in the split are not amplitudes since their particles do not satisfy momentum conservation. We leave the interpretation of these resulting objects for future research.
	
	\section*{Acknowledgements}
	
	The second author is very grateful to the Institute for Advanced Study for excellent working conditions while this project was initiated, and he thanks Nima Arkani-Hamed, Johannes Henn and Bernd Sturmfels for related discussions and encouragement.  This research was supported in part by a grant from the Gluskin Sheff/Onex Freeman Dyson Chair in Theoretical Physics and by Perimeter Institute. Research at Perimeter Institute is supported in part by the Government of Canada through the Department of Innovation, Science and Economic Development Canada and by the Province of Ontario through the Ministry of Colleges and Universities.  This research received funding from the European Research Council (ERC) under the European Union’s Horizon 2020 research and innovation programme (grant agreement No 725110), Novel structures in scattering amplitudes.
	
	\appendix

	\section{Definition of Amputated Currents}\label{sec: currents}
	
	Throughout this work, we have used amputated currents in various quantum field theories of scalars in order to characterize the behaviour of the corresponding amplitudes when restricted to the split kinematic subspace. In this appendix we give a formal definition of the objects. 
	
	Currents are objects in quantum field theory which appear when one interpolates between correlation functions and scattering amplitudes. Recall that the LSZ formalism starts with a correlation function of operators in coordinate space $G(x_1,x_2,\ldots ,x_n)$. Fourier transforming to momentum space produces a distribution localized on the momentum conservation loci
	$$ \delta^D(p_1+p_2+\cdots +p_n)\tilde G(p_1,p_2,\ldots ,p_n)\,. $$
	This is due to translational invariance of the correlation function $G(x_1,x_2,\ldots ,x_n)$.
	
	The function $\tilde G(p_1,p_2,\ldots ,p_n)$ has simple poles of the form $1/p_i^2$ and a scattering amplitude is obtained by the limiting procedure (or multidimensional residue computation)
	\be\label{LSZ}
	A(p_1,p_2,\ldots ,p_n) = \left(\prod_{i=1}^n\lim_{p_i^2\to 0} p_i^2\right) \tilde G(p_1,p_2,\ldots ,p_n).
	\ee
	The process of multiplying by $p_i^2$ is called ``amputating" the $i^{\rm th}$-leg. 
	A current is defined by performing all but one of the operations in \eqref{LSZ}. Let us assume that the $n^{\rm th}$-leg is spared. Then,
	\be\label{currentA}
	J(p_1,p_2,\ldots ,p_{n-1}) := \left(\prod_{i=1}^{n-1}\lim_{p_i^2\to 0} p_i^2\right) \tilde G(p_1,p_2,\ldots ,p_{n-1},p_n).
	\ee
	Note that the current still possesses the $1/p_n^2$ pole and hence the $n^{\rm th}$ leg is said to remain off-shell, i.e. $p^2_n\neq 0$.
	
	In this work, the relevant object is the amputated current, i.e.
	\be\label{currentB}
	{\cal J}(p_1,p_2,\ldots ,p_{n-1}) := p_n^2 J(p_1,p_2,\ldots ,p_{n-1}) .
	\ee

	In general, (amputated) currents are not unique. This is most apparent in gauge theories where currents are not even gauge invariant. The reason is that physical observables are obtained from scattering amplitudes and therefore any two currents that differ by something that vanishes when $p_n^2=0$ lead to the same physical consequences. 
	
	Here, however, we are using currents to determine the behavior of amplitudes and as such there can be no ambiguity. 
	
	Luckily, for scalar theories there is a natural prescription which provides the required definition. The Feynman diagrams used to compute correlation functions in momentum space and amplitudes are combinatorially identical. The prescription is to write each Feynman diagram in terms of a basis of Mandelstam invariants provided by the planar ones with respect to the canonical order $\mathbb{I}$. Each such invariant can be made to depend on only a set of particles not containing label $n$. Each Feynman diagram is then fully amputated.  
	
	While this definition is precise, it is not very effective in practice as computing amplitudes or currents using Feynman diagrams quickly becomes impractical as $n$ increases. This is why we provide a definition using the CHY formalism. In fact, this definition leads exactly to the amputated currents that appear in smooth splittings. 
	
	Consider the most general CHY potential for $n$ particles and we will allow three of them to be off-shell, say particles $i,j,k$. Of course, we are only interested in the case with a single off-shell particles but the construction is more uniform is we allow all three to be off-shell. 
	
	Following Naculich's construction \cite{Naculich:2015zha}, we define the modified CHY potential\footnote{Naculich works directly with the scattering equations and not with the potential but it is straightforward to translate.}
	\be
	\begin{split}\label{naculich}
		{\cal S}_n = & \sum_{a<b} 2p_a\cdot p_b\log \, (\sigma_a-\sigma_b) + (p_i^2+p_j^2-p_k^2)\log\, (\sigma_i-\sigma_j)+ \\  & (p_k^2+p_i^2-p_j^2)\log\, (\sigma_k-\sigma_i)+(p_j^2+p_k^2-p_i^2)\log\, (\sigma_j-\sigma_k).
	\end{split}
	\ee
	
	Note that this potential was designed as to preserve $SL(2,\mathbb{C})$ invariance. This means that three of the punctures can be fixed and it is natural to take the set $\{ \sigma_i,\sigma_j,\sigma_k\} $ to be $\{ 0,1,\infty \}$. 
	
	Let us choose $\sigma_i =0,\sigma_j=1$, and $\sigma_k=\infty$. In this case the potential becomes
	\be\label{currentCHYP}
	\begin{split}
		{\cal S}_n = \sum_{a<b\, :\, a,b\notin \{ i,j,k \}} \!\!\! s_{ab}\log \, (\sigma_a-\sigma_b) +\sum_{a\notin \{i,j\}}\left( 2p_a\cdot p_i \log\, (\sigma_a) + 2p_a\cdot p_j \log\, (1-\sigma_a)\,\right).   
	\end{split}
	\ee
	Note that any term containing $\sigma_k$ drops out while $\log (\sigma_i-\sigma_j) = \log 1 =0$. 
	
	Having constructed the CHY potential it is possible to give the CHY formula for the five kind of amputated currents used in the main text. 
	
	We present them in the form of a lemma. In the lemma the CHY potential ${\cal S}_n$ is always the one defined in \eqref{currentCHYP}. We also use $\textbf{A}_n^{[p\, q]}$ to denote the submatrix of the matrix $\textbf{A}_n$ obtained by removing the $p^{\rm th}$ and $q^{\rm th}$ rows and columns. The entries of the $n\times n$ matrix $\textbf{A}_n$ that do not involve off-shell legs are given by the standard expression $A_{ab}=s_{ab}/(\sigma_a-\sigma_b)$. Likewise, $\textbf{A}_n^{[i\, j\, k]}$ denotes the submatrix of the matrix $\textbf{A}_n$ obtained by removing the $i^{\rm th}$, $j^{\rm th}$ and $k^{\rm th}$ rows and columns.
	
	Before proceeding, a comment on notation is required. An amputated current is often written in a form in which the $n^{\rm th}$ particle corresponds to the off-shell leg and to indicate this the $n^{\rm th}$ label is not shown as in \eqref{currentB}. However, in the statement of the lemma we allow the off-shell leg to be any leg in a given set and therefore all labels are shown in the currents.   
	
	\begin{lem}
		Let $q\in \{i,j,k\}$ represent the off-shell leg of the current. Then the CHY representation of a biadjoint amputated current is given by,
		\be 
		{\cal J}(1,2,\ldots ,n) = \int\!\!\prod_{a\notin \{i,j,k\}}\!\! d\sigma_a\delta\left(\frac{\partial {\cal S}_n}{\partial \sigma_a}\right) \left(\frac{|i\,j||j\, k||k\, i|}{|1\, 2||2\, 3|\cdots |n-1\,n||n\, 1|}\right)^2\,.
		\ee 
		The CHY representation of a NLSM amputated current is,
		\be 
		{\cal J}^{\rm NLSM}(1,2,\ldots ,n) = \int\!\!\prod_{a\notin \{i,j,k\}}\!\! d\sigma_a\delta\left(\frac{\partial {\cal S}_n}{\partial \sigma_a}\right) \left(\frac{|i\,j||j\, k||k\, i|}{|1\, 2||2\, 3|\cdots |n-1\,n||n\, 1|}\right) \, \frac{|i\,j||j\, k||k\, i|}{|p\, q|^2}{\rm det} \textbf{A}_n^{[p\, q]}.
		\ee 
		Here $q$ is arbitrary (with $q\neq p$), although in practise it is convenient to choose it in the set $\{i,j,k\}$.
		
		Similarly, the CHY representation of a mixed NLSM amputated current is given by,
		\be 
		{\cal J}^{{\rm NLSM}\oplus \phi^3}(1,2,\ldots ,n|i,j,k) = \int\!\!\prod_{a\notin \{i,j,k\}}\!\! d\sigma_a\delta\left(\frac{\partial {\cal S}_n}{\partial \sigma_a}\right) \left(\frac{|i\,j||j\, k||k\, i|}{|1\, 2||2\, 3|\cdots |n-1\,n||n\, 1|}\right) \, {\rm det} \textbf{A}_n^{[i\, j\, k]}.
		\ee 

		The CHY representation of a special Galileon amputated current is,
		\be 
		{\cal J}^{\rm sGal} = \int\!\!\prod_{a\notin \{i,j,k\}}\!\! d\sigma_a\delta\left(\frac{\partial {\cal S}_n}{\partial \sigma_a}\right) \left( \frac{|i\,j||j\, k||k\, i|}{|p\, q|^2}{\rm det} \textbf{A}_n^{[p\, q]}\right)^2
		\ee 
		and finally the CHY representation of a mixed special Galileon amputated current is,
		\be 
		{\cal J}^{{\rm sGal}\oplus \phi^3}(i,j,k) = \int\!\!\prod_{a\notin \{i,j,k\}}\!\! d\sigma_a\delta\left(\frac{\partial {\cal S}_n}{\partial \sigma_a}\right) \left( {\rm det} \textbf{A}_n^{[i\, j\, k]}\right)^2.
		\ee 
	\end{lem}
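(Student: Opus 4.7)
The plan is to verify each formula in the Lemma by reducing the CHY integral to its Feynman diagram expansion and matching against the combinatorial definition of an amputated current given in \eqref{currentB}. The common setup throughout is Naculich's modified potential \eqref{naculich}, whose additional $p_i^2, p_j^2, p_k^2$ terms preserve $SL(2,\mathbb{C})$ invariance when legs are allowed off-shell. After the gauge fixing $\sigma_i = 0, \sigma_j = 1, \sigma_k = \infty$ this becomes \eqref{currentCHYP}, whose critical points I would adopt as the relevant off-shell scattering equations; a first sanity check is to verify that these reduce to the standard scattering equations whenever all three legs $\{i,j,k\}$ are on-shell, ensuring consistency with the usual CHY formalism.

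For the biadjoint formula, the plan is to apply the standard CHY theorem for $m_n(\mathbb{I},\mathbb{I})$ essentially verbatim. The double Parke-Taylor integrand, dressed by the Fadeev-Popov factor $|ij||jk||ki|$ that accounts for the three-puncture gauge fixing, localizes on the critical points of \eqref{currentCHYP} and produces the sum over planar cubic trees weighted by the product of their planar propagators. With one leg $q \in \{i,j,k\}$ off-shell, the only change from the on-shell case is that the propagator $1/p_q^2$ is absent from each diagram, which is precisely the amputation prescription described after \eqref{currentB}. For the NLSM, special Galileon, and mixed versions, I would invoke the CHY theorems of \cite{Cachazo:2014xea,Cachazo:2016njl} that identify the integrands containing $\det' \textbf{A}_n$ (or $\det \textbf{A}_n^{[i\,j\,k]}$ for the mixed cases) with the Feynman-diagram sums of the corresponding theories on-shell, and then extend these identifications off-shell via Naculich's prescription, with the off-shell leg again producing the missing propagator.

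The main obstacle is verifying that the reduced determinant $\det \textbf{A}_n^{[p\,q]} / |p\,q|^2$ remains independent of the choice $(p,q)$ on the support of the off-shell scattering equations. On-shell this follows from $\textbf{A}_n$ having co-rank two, with null vectors determined by momentum conservation $\sum_b A_{ab} = 0$; off-shell one must check that Naculich's modification preserves this co-rank structure on the critical locus of \eqref{currentCHYP} up to corrections proportional to $p_q^2$ that are absorbed by the Fadeev-Popov factor or decouple from the result. I would handle this by an explicit null-vector analysis, tracking how momentum conservation deforms under the $p_q^2$ insertions in the potential, and by checking agreement in the low-point cases $n = 4, 5, 6$ against direct Feynman-diagram computation of each amputated current. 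Once $(p,q)$-independence is established, each of the five formulas reduces to an application of an established on-shell CHY theorem with precisely one planar propagator amputated, matching the Feynman prescription for the off-shell leg.
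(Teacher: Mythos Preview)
Your proposal is correct and follows essentially the same approach as the paper: both reduce the claim to the Dolan--Goddard argument for the biadjoint case and Naculich's off-shell extension for the general construction, with the other theories handled by invoking the CHY integrand identifications of \cite{Cachazo:2014xea,Cachazo:2016njl}. The paper's own proof is in fact considerably more terse than your plan---it simply asserts that the result is ``evident'' from those two references without isolating the $(p,q)$-independence of $\det'\textbf{A}_n$ on the off-shell scattering equations as a separate obstacle, so your added care on that point is extra rigor rather than a departure in strategy.
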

	
	\begin{proof}
		
		To prove the lemma it is required to show that the corresponding CHY formulas reproduce the amputated currents as defined using Feynman diagrams. However, for scalar field theories, this is evident from the Dolan-Goddard proof of biadjoint amplitudes \cite{Dolan:2013isa} and from Naculich's general construction \cite{Naculich:2015zha}.
		
	\end{proof}

	\section{Proof of Determinantal Product Formula: Lemma \ref{detLemma}} \label{apA}
	
	In the main text we proved the smooth splitting formula for NSLM and special Galileon amplitudes using Lemma \ref{detLemma}. In this appendix we provide the proof. For the reader's convenience we rewrite the statement of the Lemma.
	
	\begin{lem}
		%\label{tatb}
		Let $M\in \mathbb{C}^{2m\times 2m}$ be antisymmetric, $L\in \mathbb{C}^{r\times r}$, and $W\in \mathbb{C}^{(2m+r)\times (2m+r)}$ defined in terms of $M$ and $L$ as follows
		\begin{equation}
			W :=
			\left[ \begin{array}{ccccc|ccccc}
				0 & M_{1,2} & \cdots & M_{1,2m-1}  & M_{1,2m} &  0 & 0 & 0 & \cdots & 0  \\
				-M_{1,2} & 0 & \cdots & M_{2,2m-1}  & M_{2,2m} & 0 & 0 & 0 & \cdots & 0  \\
				\vdots & \vdots & \vdots & \ddots & \vdots & \vdots & \vdots & \vdots & \ddots & \vdots \\
				-M_{1,2m-1} & -M_{2,2m-1} & \cdots & 0  & M_{2m-1,2m} & 0 & 0 & 0 & \cdots & 0  \\
				-M_{1,2m} & -M_{2,2m} & \cdots & -M_{2m-1,2m}  & 0 & c_1 & c_2 & c_3 & \cdots & c_{r} \\
				\hline
				0 & 0 & \cdots & 0 & d_1 & L_{1,1} & L_{1,2} & \cdots & L_{1,r-1} & L_{1,r}  \\
				0 & 0 & \cdots & 0 & d_2 & L_{2,1} & L_{2,2} & \cdots & L_{2,r-1} & L_{2,r} \\
				\vdots & \vdots & \ddots & \vdots & \vdots & \vdots & \vdots & \ddots & \vdots & \vdots \\
				0 & 0 & \cdots & 0 & d_r & L_{r,1} & L_{r,2} & \cdots & L_{r,r-1} & L_{r,r}\\
			\end{array}
			\right] 
		\end{equation}
		with $d_a$ and $c_a$ arbitrary complex numbers, then the following holds
		\be  
		\textrm{det}(W) = \textrm{det}(M)\textrm{det}(L).
		\ee\label{stat}
		
	\end{lem}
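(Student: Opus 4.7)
The plan is to compute $\det(W)$ via the Schur complement formula, treating $W$ as a block matrix with upper-left block $M$ and lower-right block $L$, and to exploit the fact that the off-diagonal blocks are rank-one in order to reduce the Schur complement to a multiple of a single entry of $M^{-1}$, which will vanish by antisymmetry.

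Concretely, I would first write
\[
W = \begin{pmatrix} M & E \\ F & L \end{pmatrix},
\]
where the $(2m)\times r$ matrix $E$ is zero except in its last row, which is $(c_1,\ldots,c_r)$, and the $r\times (2m)$ matrix $F$ is zero except in its last column, which is $(d_1,\ldots,d_r)^T$. Thus $E = e_{2m} c^T$ and $F = d\, e_{2m}^T$, where $e_{2m}$ denotes the $2m$-th standard basis vector in $\mathbb{C}^{2m}$, and $c,d\in\mathbb{C}^r$ are the obvious column vectors. I would first treat the case in which $M$ is invertible; the general case then follows by continuity since invertible antisymmetric matrices are Zariski dense in the space of antisymmetric matrices of even size $2m$.

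Assuming $M$ invertible, the Schur complement identity gives
\[
\det(W) \;=\; \det(M)\,\det\!\bigl(L - F M^{-1} E\bigr).
\]
Substituting the rank-one expressions for $E$ and $F$, I would compute
\[
F M^{-1} E \;=\; d\, e_{2m}^T M^{-1} e_{2m}\, c^T \;=\; (M^{-1})_{2m,2m}\; d c^T.
\]
The key observation is that $M$ antisymmetric implies $(M^{-1})^T = (M^T)^{-1} = (-M)^{-1} = -M^{-1}$, so $M^{-1}$ is also antisymmetric, hence its diagonal entries vanish and in particular $(M^{-1})_{2m,2m}=0$. Therefore $F M^{-1} E = 0$ and $\det(W)=\det(M)\det(L)$, as claimed.

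The only subtle point — and what I would call out as the main potential obstacle — is the reduction to the invertible case: if $M$ is singular (e.g.\ for Pfaffian-vanishing configurations), the Schur complement is not directly applicable. I would dispatch this either by a continuity/Zariski-density argument (replace $M$ by $M+\varepsilon M_0$ for a generic invertible antisymmetric $M_0$ and let $\varepsilon\to 0$, since both sides are polynomials in the entries), or alternatively by giving a direct cofactor-expansion proof in which one observes that any nonzero term in the Leibniz expansion of $\det(W)$ must pair rows $1,\dots,2m-1$ only with columns $1,\dots,2m$ and rows $2m+1,\dots,2m+r$ only with columns $2m,\dots,2m+r$, forcing the permutation to split as a product of a permutation of $\{1,\dots,2m\}$ and one of $\{2m+1,\dots,2m+r\}$ with the off-diagonal contributions cancelling by the antisymmetry of $M$. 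Either route yields the identity for all antisymmetric $M$.
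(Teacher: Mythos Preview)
Your proof is correct, but it takes a different route from the paper's. The paper proceeds by a direct cofactor expansion along the $2m$-th column: each contribution from an entry $d_a$ produces a cofactor that is block upper-triangular with upper-left block equal to the $(2m-1)\times(2m-1)$ antisymmetric submatrix obtained from $M$ by deleting its last row and column; since odd-dimensional antisymmetric matrices have vanishing determinant, all such contributions are zero. The same argument along the $2m$-th row kills the $c_a$-dependence, and one then sets $c_a=d_a=0$ to obtain a block-diagonal matrix.

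Both arguments ultimately exploit the antisymmetry of $M$, but packaged differently: the paper uses $\det(\text{odd-size antisymmetric})=0$, while you use that $M^{-1}$ is antisymmetric and hence has vanishing diagonal. Your Schur-complement computation is cleaner and pinpoints the mechanism in one line, but it costs you the auxiliary density/continuity step to handle singular $M$. The paper's cofactor expansion is more elementary and works uniformly for all $M$ without any genericity assumption, which is arguably preferable here given that the lemma is applied in a context where $M$ is built from kinematic data and need not be invertible. Your alternative sketch via the Leibniz expansion is in fact closer in spirit to what the paper does.
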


	\begin{proof}
		Let us compute the determinant on the LHS of (\ref{stat}) using the $2m$-th column to expand. Note that the contribution from any $d_a$ is of the form
		\begin{equation}
			\textrm{det}\left[\begin{array}{c|c}
				P & Q \\
				\hline
				0 & R
			\end{array}\right]=\textrm{det}(P)\textrm{det}(R)
			\label{detPQR}
		\end{equation}
		where 
		\begin{equation}
			P=\left[\begin{array}{ccccc}
				0 & M_{1,2} & M_{1,3} & \cdots & M_{1,2m-1} \\
				-M_{1,2} & 0 & M_{2,3} & \cdots & M_{2,2m-1} \\
				-M_{1,3} & -M_{2,3} & 0 & \cdots & M_{3,2m-1} \\
				\vdots & \vdots & \vdots & \ddots & \vdots  \\
				-M_{1,2m-1} & -M_{2,2m-1} & -M_{3,2m-1} & \cdots & 0 \\
			\end{array}\right]\,.
		\end{equation}
		Since $P$ is an odd-dimensional antisymmetric matrix, its determinant is zero and therefore the determinant (\ref{detPQR}) vanishes. This implies that the determinant on the LHS of (\ref{stat}) is independent of $d_a$. Likewise, the determinant can also be shown to be independent of $c_a$. 
		
		Having proved that \eqref{detPQR} is independent of the values of $d_a$ and $c_a$, it is possible to set them to any convenient values. In this case, it is clear that by setting $d_a=c_a =0$ for all $a\in \{ 1,2,\ldots ,r\}$ one is left with the determinant of a block diagonal matrix. Using the elementary property of determinants that the determinant of a block-diagonal matrix is the product of the determinants of the blocks the result follows. 
	\end{proof}
	
	\bibliographystyle{jhep}
	\bibliography{references}
	
\end{document}